\theoremstyle{definition}
\newtheorem{theorem}{Theorem}[section]
\newtheorem{lemma}[theorem]{Lemma}
\newtheorem{example}[theorem]{Example}
\newtheorem{definition}[theorem]{Definition}
\newtheorem{proposition}[theorem]{Proposition}
\title{Indicative Conditionals and Dynamic Epistemic Logic}
\author{Wesley H. Holliday
\institute{University of California, Berkeley}
\email{wesholliday@berkeley.edu}
\and
Thomas F. Icard, III
\institute{Stanford University}
\email{icard@stanford.edu}
}
\begin{document}

\maketitle

\begin{abstract}
Recent ideas about epistemic modals and indicative conditionals in formal semantics have significant overlap with ideas in modal logic and dynamic epistemic logic. The purpose of this paper is to show how greater interaction between formal semantics and dynamic epistemic logic in this area can be of mutual benefit. In one direction, we show how concepts and tools from modal logic and dynamic epistemic logic can be used to give a simple, complete axiomatization of Yalcin's \cite{Yalcin2012a} semantic consequence relation for a language with epistemic modals and indicative conditionals. In the other direction, the formal semantics for indicative conditionals due to Kolodny and MacFarlane \cite{MacFarlane2010} gives rise to a new dynamic operator that is very natural from the point of view of dynamic epistemic logic, allowing succinct expression of \textit{dependence} (as in dependence logic) or \textit{supervenience} statements. We prove decidability for the logic with epistemic modals and Kolodny and MacFarlane's indicative conditional via a full and faithful computable translation from their logic to the modal logic \textsf{K45}.\end{abstract}

Logic and the formal semantics of natural language are related by blood and yet somewhat estranged. Today it is rare that formal semanticists consider questions of axiomatizability or decidability of the consequence relations defined by model-theoretic accounts of natural language fragments. Meanwhile logicians focus more on logics motivated by mathematical or philosophical concerns than on logics arising from semantic theories in linguistics.

The cost of estrangement is that insights from one field that would be useful for the other may go unnoticed or efforts may be unnecessarily duplicated. The aim of this paper is to help encourage a family reunion between logic and formal semantics of natural language, by way of concrete examples. The topic of modals and conditionals is a prime example of overlap between formal semantics and logic. In this paper, we consider the case of epistemic modals and indicative conditionals. 

In \S\ref{DELtoLanguage}, we show how concepts and tools from modal logic and dynamic epistemic logic can be used to give a simple, complete axiomatization of Yalcin's \cite{Yalcin2012a} semantic consequence relation for a language with epistemic modals and indicative conditionals. Then in \S\ref{SemanticsToDEL}, we show that the formal semantics for indicative conditionals due to Kolodny and MacFarlane \cite{MacFarlane2010} gives rise to a new dynamic operator that is very natural from the point of view of dynamic epistemic logic, allowing succinct expression of \textit{dependence} (as in dependence logic) or \textit{supervenience} statements. We prove decidability for the logic with epistemic modals and Kolodny and MacFarlane's indicative conditional via a full and faithful computable translation from their logic to the modal logic~\textsf{K45}.

There are other examples of clear overlap between formal semantics and dynamic epistemic logic, such as the connection between the \textit{dynamic logical consequence} of \cite{Willer2012} and the \textit{dynamic consequence} of  \cite{Benthem2008}, between the notions of \textit{epistemic contradictions} in \cite{Yalcin2007} and of \textit{Moorean sentences} in \cite{HI2010}, and more. Thus, the examples to follow by no means exhaust the connections to be made between formal semantics and dynamic epistemic logic. 

\section{Applying DEL to Formal Semantics}\label{DELtoLanguage}

Throughout we work with the language $\mathcal{L}(\Rightarrow)$ defined by:
\[\varphi::= p\mid \neg\varphi\mid (\varphi\wedge\varphi)\mid \Box\varphi \mid (\varphi\Rightarrow\varphi),\]
where $p$ comes from a fixed set of propositional variables. The connectives $\vee$, $\rightarrow$, $\leftrightarrow$, $\bot$, and $\Diamond$ are defined as usual, so $\Diamond\varphi :=\neg\Box\neg\varphi$. In the intended interpretation, $\Box$ and $\Diamond$ stand for ``must'' and ``might'', and $\Rightarrow$ stands for the indicative conditional ``if\dots then''. $\mathcal{L}$ is the set of formulas  that do not contain $\Rightarrow$. \textit{Nonmodal} formulas are formulas of $\mathcal{L}$ that do not contain $\Box$ (and hence $\Diamond$).

We begin by reviewing Yalcin's \cite{Yalcin2012a} semantics for $\mathcal{L}(\Rightarrow)$. Models are tuples $\mathcal{M}=\langle W,V\rangle$ where $W$ is a nonempty set and $V$ is a function assigning to each propositional variable $p$ a proposition $V(p)\subseteq W$. Formulas are evaluated in a model $\mathcal{M}$ at a world $w\in W$ relative to an \textit{information state} $X\subseteq W$ as follows:
\begin{itemize}
\item $\mathcal{M},w,X\vDash p$ iff $w\in V(p)$;
\item $\mathcal{M},w,X\vDash \neg\varphi$ iff $\mathcal{M},w,X\nvDash \varphi$; 
\item $\mathcal{M},w,X\vDash \varphi\wedge\psi$ iff $\mathcal{M},w,X\vDash \varphi$ and $\mathcal{M},w,X\vDash\psi$;
\item $\mathcal{M},w,X\vDash \Box\varphi$ iff for \textit{all} $v\in X$, $\mathcal{M},v,X\vDash \varphi$; 
\item $\mathcal{M},w,X\vDash \Diamond\varphi$ iff for \textit{some} $v\in X$, $\mathcal{M},v,X\vDash \varphi$;
\item $\mathcal{M},w,X\vDash \varphi\Rightarrow\psi$ iff $\mathcal{M},w,\llbracket \varphi\rrbracket^{\mathcal{M},X}\vDash \Box\psi$,
\end{itemize}
where $\llbracket \varphi\rrbracket^{\mathcal{M},X}=\{v\in X\mid \mathcal{M},v,X\vDash \varphi\}$.\footnote{\label{nonempty}Equivalently, we could define $\llbracket \varphi\rrbracket^{\mathcal{M},X}=\{v\in W\mid \mathcal{M},v,X\vDash \varphi\}$ and $\mathcal{M},w,X\vDash \varphi\Rightarrow\psi$ iff $\mathcal{M},w,X\cap\llbracket \varphi\rrbracket^{\mathcal{M},X}\vDash \Box\psi$, but we prefer less notation. Also note that if one wants to require  that information states be nonempty, then the clause for $\Rightarrow$ must be changed, e.g., to $\mathcal{M},w,X\vDash \varphi\Rightarrow\psi$ iff $\mathcal{M},w,X\nvDash \Diamond\varphi$ or $\mathcal{M},w,\llbracket \varphi\rrbracket^{\mathcal{M},X}\vDash \Box\psi$.} 

A formula $\varphi$ is \textit{valid} iff it is true at every world relative to every information state in every model. To define \textit{consequence}, first let $\mathcal{M},X\vDash \varphi$ ($X$ ``accepts'' $\varphi$) iff for all $w\in X$, we have $\mathcal{M},w,X\vDash\varphi$. Then for a set $\Sigma$ of formulas, Yalcin defines $\varphi$ to be an \textit{informational consequence} of $\Sigma$ iff for every model $\mathcal{M}=\langle W,V\rangle$ and information state $X\subseteq W$, if $\mathcal{M},X\vDash\sigma$ for all $\sigma\in\Sigma$, then $\mathcal{M},X\vDash \varphi$. (If $\varphi$ is valid, then $\varphi$ is an informational consequence of $\varnothing$, but the converse fails for, e.g., $p\vee\Diamond\neg p$.) 
 
The above ``domain semantics'' for $\Box$ and $\Diamond$ is presented by Yalcin \cite{Yalcin2007} as an alternative to standard Hintikka-style relational semantics for epistemic logic. Conceptually, the two semantics are different. Mathematically, the domain semantics is equivalent to a special case of the relational semantics, using what we might call \textit{uniform} relational models $\langle W,R,V\rangle$ in which every two worlds have the same set of $R$-successors. In fact, since the definition of ``informational consequence'' only involves pairs $w,X$ such that $w\in X$, it is as if we were working with relational models in which $R$ is the universal relation. In this setting, informational consequence is equivalent to the notion of \textit{global consequence} from modal logic (see \cite[\S1.5]{Blackburn2001}). Thus, from the point of view of modal logic, the above semantics for $\Box$ and $\Diamond$ can be summarized as follows: the proposed consequence relation for epistemic modals is the \textit{global consequence} relation over \textit{universal} relational models according to the standard semantics (cf. \cite{Schulz2010}).\footnote{Yalcin \cite{Yalcin2007} also proposes domain semantics for a language with both epistemic modals and attitude verbs such as `believe' and `suppose', in which case the comparison of domain semantics and relational semantics is not as straightforward.}

As for the indicative conditional $\Rightarrow$, this is where dynamic epistemic logic \cite{Ditmarsch2008,Benthem2011} enters the story. Consider a system of dynamic epistemic logic that extends the language of propositional modal logic with formulas of the form $[\varphi]\psi$, intuitively interpreted as ``after \textit{information update} with $\varphi$, $\psi$ is the case.'' Starting with relational models $\langle W,R,V\rangle$, the semantics for $[\varphi]\psi$ as in \cite[\S4.9]{Ditmarsch2008} is:
\begin{itemize} 
\item $\langle W,R,V\rangle,w\vDash [\varphi]\psi$ iff $\langle W,R^\varphi,V\rangle,w\vDash \psi$,
\end{itemize}
where $R^\varphi$ is defined by: $wR^\varphi v$ iff $wRv$ and $\langle W,R,V\rangle, v\vDash\varphi$. A superficial difference between this semantics and that of $\Rightarrow$ above is that here we are ``changing the model,'' whereas above we ``shifted the information state.''\footnote{The semantics of dynamic epistemic logic can be equivalently repackaged by pulling the relation $R$ out of the model, so we would define `$\langle W,V\rangle, w, R\vDash \varphi$', and then in the case where $R$ is uniform, by replacing $R$ with the image $X=R[W]=\{v\in W\mid \exists w\in W\colon wRv\}$, so we would define `$\langle W,V\rangle,w,X\vDash\varphi$' as in domain semantics.} The semantics for $\varphi\Rightarrow\psi$ is in fact equivalent to the semantics for $[\varphi]\Box\psi$. The form $[\varphi]\Box\psi$ has been studied extensively in dynamic epistemic logic, where the main interest is in reasoning about what is known or believed after information update, so $\Rightarrow$ turns out to be a familiar modality.

Elsewhere \cite{HI2013,HIForthcoming} we have argued that it is valuable not only in logic but also in natural language semantics to accompany a formal semantic proposal with a complete \textit{axiomatization} (when possible), capturing  basic entailment predictions of the semantics from which all of its other entailment predictions may be derived. In this spirit, and taking advantage of the connections with modal and dynamic epistemic logic above, we establish as our first main result a complete axiomatization of the logic of Yalcin's \cite{Yalcin2012a} semantics for epistemic modals and indicative conditionals.\footnote{Bledin \cite{Bledin2014} also proposes a logic for Yalcin's modal-indicative semantics, but with a Fitch-style natural deduction system for an extension of the language that includes symbols for information states and information acceptance relations.} 

\begin{theorem}\label{Ax} The set of $\mathcal{L}(\Rightarrow)$ formulas that are valid according to Yalcin's semantics is the \textnormal{Yalcin logic}: the smallest set of formulas that is closed under \textit{replacement of equivalents},\footnote{Closure under replacement of equivalents means that if $\alpha\leftrightarrow\beta\in\mathsf{L}$, and $\varphi'$ is obtained from $\varphi$ by replacing some occurrence of $\alpha$ in $\varphi$ by $\beta$, then $\varphi\leftrightarrow\varphi'\in\mathsf{L}$.} \textit{modus ponens} for the material conditional $\rightarrow$, and \textit{necessitation} for $\Box$, and contains all substitution instances of propositional tautologies and all instances of the axioms in Figure \ref{YalcinAxioms}. Moreover, $\varphi$ is an informational consequence of $\{\sigma_1,\dots,\sigma_n\}$ iff $(\Box\sigma_1\wedge\dots\wedge\Box\sigma_n)\to\Box\varphi$ is a theorem of the Yalcin logic. 
\end{theorem}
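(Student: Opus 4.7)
The plan is to leverage the observation highlighted earlier in the excerpt: Yalcin's semantics for $\Box$ coincides with standard Kripke semantics over universal frames, and $\varphi\Rightarrow\psi$ is semantically equivalent to the public-announcement formula $[\varphi]\Box\psi$ of dynamic epistemic logic. This lets me aim for completeness by reducing $\mathcal{L}(\Rightarrow)$ to its $\Rightarrow$-free fragment $\mathcal{L}$, whose Yalcin-validity coincides with $\mathsf{S5}$-theoremhood (with $\Box$ understood as a universal modality), and then invoking completeness of $\mathsf{S5}$.

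Concretely, I would proceed in three stages. First, \emph{soundness}: check that each axiom in Figure~\ref{YalcinAxioms} is valid under Yalcin's semantics and that replacement of equivalents, modus ponens, and necessitation all preserve validity; these are routine semantic calculations. Second, \emph{reduction to $\mathcal{L}$}: define a translation $t\colon \mathcal{L}(\Rightarrow)\to\mathcal{L}$ and prove by induction that each biconditional $\varphi\leftrightarrow t(\varphi)$ is a Yalcin-theorem. The inductive step for $\varphi\Rightarrow\psi$ uses axioms in Figure~\ref{YalcinAxioms} as public-announcement-style reduction principles: pushing $\Rightarrow$ past Boolean connectives, collapsing $\varphi\Rightarrow\Box\psi$ via S5-like behavior of $\Box$ on the shifted information state, and composing nested conditionals. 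Replacement of equivalents then propagates each local reduction into arbitrary contexts. Third, the \emph{base logic}: the $\mathcal{L}$-fragment of the Yalcin logic should coincide with $\mathsf{S5}$, whose completeness is classical. Stringing these together, $\varphi$ is Yalcin-valid iff $t(\varphi)$ is Yalcin-valid iff $t(\varphi)$ is an $\mathsf{S5}$-theorem iff $t(\varphi)$ is in the Yalcin logic iff $\varphi$ is in the Yalcin logic.

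For the ``moreover'' clause I would exploit the semantic fact that $\mathcal{M},w,X\vDash\Box\sigma$ is independent of $w$ and equivalent to $\mathcal{M},X\vDash\sigma$. Consequently $(w,X)\vDash (\Box\sigma_1\wedge\cdots\wedge\Box\sigma_n)\to\Box\varphi$ depends only on $X$ and says precisely that joint $X$-acceptance of all $\sigma_i$ forces $X$-acceptance of $\varphi$; validity of this material conditional therefore coincides with the informational consequence of $\varphi$ from $\{\sigma_1,\dots,\sigma_n\}$, and the claim follows from the validity-completeness established in the first part.

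The hardest step will be the reduction, and within it the case of \emph{nested} conditionals $\varphi\Rightarrow(\psi\Rightarrow\chi)$ and conditionals whose antecedent itself contains $\Box$ or $\Rightarrow$. Induction on plain formula complexity does not straightforwardly decrease here, so I would set up a well-founded measure tailored to the reduction -- say, lexicographic on the number of $\Rightarrow$ occurrences and on the size of the antecedent -- as in standard PAL reduction arguments~\cite{Ditmarsch2008}, or else rely on a composition axiom in Figure~\ref{YalcinAxioms} that fuses nested conditionals into a single one. Once the measure is in place, the remaining work is bookkeeping via replacement of equivalents and necessitation.
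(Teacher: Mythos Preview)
There are two concrete problems with your plan.

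First, the base logic is \textsf{K45}, not \textsf{S5}. Yalcin's notion of \emph{validity} quantifies over all pairs $(w,X)$, not only those with $w\in X$; hence $\Box p\to p$ fails (take $X=\{v\}$ with $v\vDash p$ and evaluate at some $w\notin X$). Accordingly the axioms in Figure~\ref{YalcinAxioms} list only \textbf{K}, \textbf{4}, \textbf{5}, and the paper's Lemma~\ref{K45comp} states precisely that the $\mathcal{L}$-fragment of Yalcin validity is \textsf{K45}. Your remark about universal frames applies to \emph{informational consequence}, where one restricts to $w\in X$, but the theorem concerns validity. Replacing \textsf{S5} by \textsf{K45} in your third stage repairs this.

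Second, your reduction sketch does not match the axioms actually available. Figure~\ref{YalcinAxioms} contains no composition axiom fusing nested conditionals, and there are no PAL-style recursion clauses for $\varphi\Rightarrow\neg\psi$ or $\varphi\Rightarrow\Box\psi$ that would let you push $\Rightarrow$ uniformly past each connective in the consequent. The paper's reduction (Lemma~\ref{Reduction}) works \emph{inside out}: take an innermost conditional $\varphi\Rightarrow\psi$ (so $\varphi,\psi\in\mathcal{L}$), put $\psi$ into \textsf{K45} conjunctive normal form $\bigwedge_i(\pi^i\vee\Diamond\beta^i\vee\Box\beta^i_1\vee\dots\vee\Box\beta^i_{m_i})$ with nonmodal $\pi^i,\beta^i,\beta^i_j$ (Lemma~\ref{NormalForm}), use \textbf{I2} to distribute over the conjunction, and then use the derived principles \textbf{A3}/\textbf{A4} (equivalent to \textbf{I3}--\textbf{I7} by Lemma~\ref{EquivAx}) to peel off the $\Box$- and $\Diamond$-disjuncts one at a time until only nonmodal consequents remain, whence \textbf{I1} applies. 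Replacement of equivalents substitutes the resulting $\mathcal{L}$-formula back into the ambient context, and one iterates; termination is immediate from the strictly decreasing count of $\Rightarrow$-occurrences, with no need for a measure on antecedents. Your overall architecture---soundness, reduction, base-logic completeness, and the ``moreover'' argument---is correct, but the reduction must go through \textsf{K45} normal forms rather than connective-by-connective recursion.
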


\begin{figure}[h]
\begin{center}
\begin{tabular}{ll}
\textbf{K} & $\Box(\varphi\to\psi)\to(\Box\varphi\to\Box\psi)$ \\ \\
\textbf{4} & $\Diamond\Diamond \varphi\rightarrow \Diamond\varphi$  \quad  \;\;\;\textbf{5} \quad $\Diamond\Box\varphi\rightarrow\Box\varphi$\\\\
\textbf{I1} & $(\varphi\Rightarrow \pi)\leftrightarrow \Box(\varphi\rightarrow \pi)$ for $\pi$ nonmodal\\\\
\textbf{I2} & $(\varphi\Rightarrow (\alpha\wedge\beta))\leftrightarrow ((\varphi\Rightarrow\alpha)\wedge(\varphi\Rightarrow\beta))$\\\\
\textbf{I3} & $(\varphi\Rightarrow \alpha)\to (\varphi\Rightarrow (\alpha\vee\beta))$\\\\
\textbf{I4}& $(\varphi\Rightarrow\alpha)\to (\varphi\Rightarrow \Box\alpha)$\\\\
\textbf{I5} & $((\varphi\Rightarrow (\alpha\vee \Box\beta))\wedge \neg (\varphi\Rightarrow\beta))\to (\varphi\Rightarrow \alpha)$\\\\
\textbf{I6} & $((\varphi\Rightarrow (\alpha\vee \Diamond \beta)) \wedge (\varphi\Rightarrow\neg\beta) )\to (\varphi\Rightarrow \alpha)$ \\\\
\textbf{I7} & $\neg (\varphi\Rightarrow\beta)\to (\varphi \Rightarrow \Diamond \neg \beta)$
\end{tabular}
\end{center}
\caption{Axioms of the Yalcin logic.}\label{YalcinAxioms}
\end{figure}

Before proving Theorem \ref{Ax}, let us observe how the axioms in Figure \ref{YalcinAxioms} can be seen as corresponding to natural language inferences. The following examples are taken verbatim from \cite{HIForthcoming}.

\begin{example}\label{Ex1} Axiom \textbf{I4} corresponds to a key prediction of the semantics that a sentence like
\begin{enumerate}[labelindent=\parindent, leftmargin=*, align=left,label=(\arabic*),ref=(\arabic*),resume]
\item If Miss Scarlet didn't do it, then Colonel Mustard did it.
\end{enumerate} 
entails 
\begin{enumerate}[labelindent=\parindent, leftmargin=*, align=left,label=(\arabic*),ref=(\arabic*),resume]
\item If Miss Scarlet didn't do it, then it must be that Colonel Mustard did it.
\end{enumerate}
Axiom \textbf{I5} suggests the prediction that the sentence
\begin{enumerate}[labelindent=\parindent, leftmargin=*, align=left,label=(\arabic*),ref=(\arabic*),resume]
\item If Miss Scarlet did it, then either Colonel Mustard was her accomplice or it must be that Professor Plum was involved.
\end{enumerate}
together with
\begin{enumerate}[labelindent=\parindent, leftmargin=*, align=left,label=(\arabic*),ref=(\arabic*),resume]
\item It's not the case that if Miss Scarlet did it, then Professor Plum was involved.
\end{enumerate}
entails
\begin{enumerate}[labelindent=\parindent, leftmargin=*, align=left,label=(\arabic*),ref=(\arabic*),resume]
\item If Miss Scarlet did it, then Colonel Mustard was her accomplice.
\end{enumerate}
And axiom \textbf{I6} suggests the prediction that
\begin{enumerate}[labelindent=\parindent, leftmargin=*, align=left,label=(\arabic*),ref=(\arabic*),resume]
\item If Miss Scarlet did it, then either she used the pipe or she might have used the candlestick.
\end{enumerate}
together with
\begin{enumerate}[labelindent=\parindent, leftmargin=*, align=left,label=(\arabic*),ref=(\arabic*),resume]
\item If Miss Scarlet did it, she didn't use the candlestick.
\end{enumerate}
entails
\begin{enumerate}[labelindent=\parindent, leftmargin=*, align=left,label=(\arabic*),ref=(\arabic*),resume]
\item If Miss Scarlet did it, then she used the pipe.
\end{enumerate}
Finally, axiom \textbf{I7} suggests the prediction that one who \textit{rejects}
\begin{enumerate}[labelindent=\parindent, leftmargin=*, align=left,label=(\arabic*),ref=(\arabic*),resume]
\item If Miss Scarlet was in the ballroom, then Colonel Mustard is guilty.
\end{enumerate}
should accept
\begin{enumerate}[labelindent=\parindent, leftmargin=*, align=left,label=(\arabic*),ref=(\arabic*),resume]
\item If Miss Scarlet was in the ballroom, it might be that Colonel Mustard is not guilty.
\end{enumerate}
\end{example}

We will prove Theorem \ref{Ax} using several lemmas. The first lemma follows from the well-known fact that \textsf{K45} is the logic of the ``uniform'' relational models mentioned above, plus the equivalence of domain semantics and uniform relational semantics for $\mathcal{L}$.

\begin{lemma}\label{K45comp} The set of $\mathcal{L}$ formulas that are valid according to Yalcin's semantics is the logic \textsf{K45}.
\end{lemma}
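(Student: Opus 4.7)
My plan is to bridge the domain semantics with standard Kripke semantics and then invoke textbook completeness results for $\mathsf{K45}$. Given a Yalcin model $\mathcal{M}=\langle W,V\rangle$ and an information state $X\subseteq W$, I associate the relational model $\mathcal{M}_X=\langle W,R_X,V\rangle$ in which $wR_X v$ iff $v\in X$; this is a \emph{uniform} relational model in the sense of the excerpt, since every world has the same successor set $X$. A routine induction on $\varphi\in\mathcal{L}$ shows that $\mathcal{M},w,X\vDash\varphi$ iff $\mathcal{M}_X,w\vDash\varphi$ under the standard Kripke clauses; the only nontrivial case is $\Box\psi$, where both sides reduce to ``$\psi$ holds at every $v\in X$.'' Conversely, every uniform relational model arises as some $\mathcal{M}_X$.

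For soundness, observe that $R_X = W\times X$ is transitive (any successor of a successor lies in $X$, hence is already a successor) and Euclidean (any two worlds with a common $R_X$-predecessor lie in $X$, and $R_X$ relates everything to everything in $X$), so every uniform frame validates $\mathsf{K45}$. Hence every $\mathsf{K45}$-theorem is true at every world of every $\mathcal{M}_X$ and is therefore Yalcin-valid on $\mathcal{L}$. For completeness, I appeal to the classical fact that $\mathsf{K45}$ is complete with respect to transitive Euclidean frames (see, e.g., \cite{Blackburn2001}), together with the observation that the submodel generated by any world $w$ of such a frame is always uniform. Indeed, letting $C=\{v\mid wRv\}$, Euclideanness plus transitivity force every $v\in C$ to have $R$-successors exactly $C$, while $w$'s successors are $C$ by construction; thus the generated submodel has the shape $\langle W',W'\times C,V'\rangle$, a uniform relational model.

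Putting the pieces together: if $\varphi\in\mathcal{L}$ is not a theorem of $\mathsf{K45}$, it is refuted at some world of a transitive Euclidean model. Since truth in the basic modal language is preserved under generated submodels, passing to the generated submodel yields a uniform model $\mathcal{M}_X$ in which $\varphi$ fails at some $w$, which by the equivalence of the first paragraph corresponds to a Yalcin countermodel. The only step beyond straightforward citation is the generated-submodel observation above, and this is really the main content of the argument; there is no genuine obstacle, since both the $\mathsf{K45}$ completeness theorem and invariance under generated submodels are standard.
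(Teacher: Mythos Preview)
Your argument is correct and is precisely the approach the paper gestures at: the paper does not give a detailed proof but simply states that the lemma ``follows from the well-known fact that \textsf{K45} is the logic of the `uniform' relational models mentioned above, plus the equivalence of domain semantics and uniform relational semantics for $\mathcal{L}$.'' Your proposal spells out exactly these two ingredients---the translation $\mathcal{M}_X=\langle W,W\times X,V\rangle$ with the inductive equivalence, and the generated-submodel reduction from arbitrary transitive Euclidean models to uniform ones---so there is nothing to add.
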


The next two lemmas show that the conditional axioms of the Yalcin logic, which we have presented in their form in Figure \ref{YalcinAxioms} to bring out connections with natural language inference as in Example \ref{Ex1}, can be equivalently repackaged as valid \textit{reduction axioms} in the sense of dynamic epistemic logic \cite{Ditmarsch2008}.

\begin{lemma}\label{ValidAx} The following axioms are valid according to Yalcin's semantics:
\begin{enumerate}
\item[\textbf{A1}] $(\varphi\Rightarrow \pi)\leftrightarrow \Box(\varphi\rightarrow \pi)$ for $\pi$ a nonmodal formula;
\item[\textbf{A2}] $(\varphi\Rightarrow (\alpha\wedge\beta))\leftrightarrow ((\varphi\Rightarrow\alpha)\wedge(\varphi\Rightarrow\beta))$;
\item[\textbf{A3}] $(\varphi\Rightarrow (\alpha\vee \Box\beta))\leftrightarrow ((\varphi\Rightarrow \alpha)\vee (\varphi\Rightarrow\beta))$;
\item[\textbf{A4}] $(\varphi\Rightarrow (\alpha\vee \Diamond \beta))\leftrightarrow ((\varphi\Rightarrow \alpha)\vee \neg (\varphi\Rightarrow\neg\beta))$.
\end{enumerate}
\end{lemma}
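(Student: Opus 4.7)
My plan is to unfold the semantic clause $\mathcal{M},w,X\vDash \varphi\Rightarrow\psi$ iff $\mathcal{M},w,Y\vDash\Box\psi$ (where $Y:=\llbracket\varphi\rrbracket^{\mathcal{M},X}$) for each axiom, and then exploit one key observation: since the clauses for $\Box$ and $\Diamond$ quantify over the information state without mentioning the world of evaluation, the truth value of $\Box\beta$ (and of $\Diamond\beta$) at a pair $(v,Y)$ depends only on $Y$, not on $v$. This world-independence fact is the engine driving \textbf{A3} and \textbf{A4}.

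For \textbf{A1}, I would first record the auxiliary fact that for nonmodal $\pi$, the truth value $\mathcal{M},v,X\vDash\pi$ is independent of $X$ (easy induction on $\pi$). Then $(\varphi\Rightarrow\pi)$ at $(w,X)$ unfolds to ``$\pi$ holds at every $v\in Y$,'' which by that auxiliary fact is equivalent to ``for every $v\in X$, if $\mathcal{M},v,X\vDash\varphi$ then $\mathcal{M},v,X\vDash\pi$,'' i.e., to $\Box(\varphi\to\pi)$ at $(w,X)$. \textbf{A2} is just distribution of the universal over conjunction: both sides unfold to ``for every $v\in Y$, both $\alpha$ and $\beta$ hold at $(v,Y)$.''

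The real content is \textbf{A3} and \textbf{A4}. For \textbf{A3}, the LHS unfolds to: for every $v\in Y$, $\mathcal{M},v,Y\vDash\alpha$ or $\mathcal{M},v,Y\vDash\Box\beta$. By world-independence of $\Box\beta$ at $Y$, either $\Box\beta$ holds at every $(v,Y)$ with $v\in Y$ (whence $\varphi\Rightarrow\beta$) or it holds at none (whence $\alpha$ must hold at each $v\in Y$, giving $\varphi\Rightarrow\alpha$); the converse directions are immediate from the same observation. For \textbf{A4}, I first reformulate $\neg(\varphi\Rightarrow\neg\beta)$ at $(w,X)$ as ``some $v\in Y$ satisfies $\mathcal{M},v,Y\vDash\beta$,'' which is precisely the statement that $\Diamond\beta$ holds at $(v,Y)$ for every (equivalently, some) $v\in Y$. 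The proof then mirrors \textbf{A3}: split on whether $\Diamond\beta$ at $Y$ holds, yielding the second RHS disjunct in one case and forcing $\varphi\Rightarrow\alpha$ in the other. The only mild subtlety is the vacuous case $Y=\varnothing$, in which every $\varphi\Rightarrow\psi$ is trivially true and each biconditional holds automatically; I do not anticipate any step being a serious obstacle, as the world-independence observation reduces each axiom to a short case split.
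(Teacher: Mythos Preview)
Your proposal is correct and follows essentially the same approach as the paper: both arguments unfold the semantic clause for $\Rightarrow$, use state-independence of nonmodal formulas for \textbf{A1}, distribute $\Box$ over conjunction for \textbf{A2}, and handle \textbf{A3}--\textbf{A4} by a case split on whether $\Box\beta$ (resp.\ $\Diamond\beta$) holds relative to $Y=\llbracket\varphi\rrbracket^{\mathcal{M},X}$. The paper phrases this split as ``$\llbracket\Box\beta\rrbracket^{\mathcal{M},Y}=\varnothing$ or not,'' while you name the underlying world-independence principle explicitly; these are equivalent formulations of the same idea.
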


\begin{proof} For \textbf{A1}, suppose $\pi$ is nonmodal. By definition, we have
\begin{eqnarray}
\mathcal{M},w,X\vDash \varphi\Rightarrow\pi & \mbox{iff} & \mathcal{M},w,\llbracket \varphi\rrbracket^{\mathcal{M},X}\vDash \Box\pi \nonumber\\
 & \mbox{iff} & \llbracket \varphi\rrbracket^{\mathcal{M},X} \subseteq \llbracket \pi \rrbracket^{\mathcal{M},\llbracket \varphi\rrbracket^{\mathcal{M},X}}.\label{1}
\end{eqnarray}
Since $\pi$ is nonmodal, we have
\begin{eqnarray*}
\llbracket \pi \rrbracket^{\mathcal{M},\llbracket \varphi\rrbracket^{\mathcal{M},X}} = \llbracket \varphi\rrbracket^{\mathcal{M},X} \cap \llbracket \pi \rrbracket^{\mathcal{M},X},
\end{eqnarray*}
so the right side of (\ref{1}) is equivalent to
\begin{eqnarray*}\llbracket \varphi\rrbracket^{\mathcal{M},X}\subseteq \llbracket \pi \rrbracket^{\mathcal{M},X}, \end{eqnarray*}
which by definition is equivalent to $\mathcal{M},w,X\vDash \Box (\varphi\to\pi)$.

For \textbf{A2}, by definition we have
\begin{eqnarray*}
\mathcal{M},w,X\vDash \varphi\Rightarrow (\alpha_1\wedge \alpha_2) & \mbox{iff} & \mathcal{M},w,\llbracket \varphi\rrbracket^{\mathcal{M}, X}\vDash\Box (\alpha_1\wedge\alpha_2) \\
& \mbox{iff} & \mathcal{M},w,\llbracket \varphi\rrbracket^{\mathcal{M}, X}\vDash\Box \alpha_1\wedge \Box\alpha_2 \\
&\mbox{iff} & \mathcal{M},w,\llbracket \varphi\rrbracket^{\mathcal{M}, X}\vDash \Box\alpha_i,\; i=1,2 \\
&\mbox{iff} & \mathcal{M},w,X\vDash \varphi\Rightarrow \alpha_i,\;i=1,2.
\end{eqnarray*}

For \textbf{A3}, by definition we have
\begin{eqnarray} \mathcal{M},w,X\vDash \varphi\Rightarrow (\alpha\vee \Box\beta)
& \mbox{iff} & \mathcal{M},w,\llbracket \varphi\rrbracket^{\mathcal{M}, X}\vDash\Box (\alpha\vee\Box\beta)  \nonumber\\
&\mbox{iff}& \llbracket \varphi\rrbracket^{\mathcal{M}, X}\subseteq \llbracket \alpha\vee\Box\beta\rrbracket^{\mathcal{M},\llbracket \varphi\rrbracket^{\mathcal{M}, X}}\nonumber\\
&\mbox{iff}& \llbracket \varphi\rrbracket^{\mathcal{M}, X}\subseteq \llbracket \alpha\rrbracket^{\mathcal{M},\llbracket \varphi\rrbracket^{\mathcal{M}, X}}\cup\llbracket\Box\beta\rrbracket^{\mathcal{M},\llbracket \varphi\rrbracket^{\mathcal{M}, X}}.\label{2}
\end{eqnarray}
Using (\ref{2}), we first show that if $\mathcal{M},w,X\vDash \varphi\Rightarrow (\alpha\vee \Box\beta)$, then $\mathcal{M},w,X\vDash \varphi\Rightarrow \alpha$ or $\mathcal{M},w,X\vDash \varphi\Rightarrow \beta$. 

Case 1: $\llbracket\Box\beta\rrbracket^{\mathcal{M},\llbracket \varphi\rrbracket^{\mathcal{M}, X}}=\varnothing$. Then (\ref{2}) implies
\begin{eqnarray}
\llbracket \varphi\rrbracket^{\mathcal{M}, X}\subseteq \llbracket \alpha\rrbracket^{\mathcal{M},\llbracket \varphi\rrbracket^{\mathcal{M}, X}} 
&\Rrightarrow&\mathcal{M},w,X\vDash \varphi\Rightarrow\alpha.\label{3}
\end{eqnarray}

Case 2: $\llbracket\Box\beta\rrbracket^{\mathcal{M},\llbracket \varphi\rrbracket^{\mathcal{M}, X}}\neq\varnothing$. Then 
\[\exists v\in W\colon \mathcal{M}, v, \llbracket \varphi\rrbracket^{\mathcal{M}, X}\vDash \Box\beta ,\]
and we have
\begin{eqnarray*}
\mathcal{M}, v, \llbracket \varphi\rrbracket^{\mathcal{M}, X}\vDash \Box\beta 
&\mbox{iff}& \llbracket \varphi\rrbracket^{\mathcal{M}, X}\subseteq \llbracket \beta\rrbracket^{\mathcal{M},\llbracket \varphi\rrbracket^{\mathcal{M}, X}}\\
&\mbox{iff}&\mathcal{M},w,X\vDash \varphi\Rightarrow \beta.
\end{eqnarray*}

Next, we show that if $\mathcal{M},w,X\vDash \varphi\Rightarrow \alpha$ or $\mathcal{M},w,X\vDash \varphi\Rightarrow \beta$, then $\mathcal{M},w,X\vDash \varphi\Rightarrow (\alpha\vee \Box\beta)$, using (\ref{2}). If $\mathcal{M},w,X\vDash \varphi\Rightarrow\alpha$, then $\llbracket \varphi\rrbracket^{\mathcal{M}, X}\subseteq \llbracket \alpha\rrbracket^{\mathcal{M},\llbracket \varphi\rrbracket^{\mathcal{M}, X}} $, which implies (\ref{2}). On the other hand, if $\mathcal{M},w,X\vDash \varphi\Rightarrow \beta$, then we have
\begin{eqnarray*}
 \llbracket \varphi\rrbracket^{\mathcal{M}, X}\subseteq \llbracket \beta\rrbracket^{\mathcal{M},\llbracket \varphi\rrbracket^{\mathcal{M}, X}} 
&\Rrightarrow &\forall v\in X\colon  \mathcal{M}, v, \llbracket \varphi\rrbracket^{\mathcal{M}, X}\vDash \Box\beta\\
&\Rrightarrow& \forall v\in \llbracket \varphi\rrbracket^{\mathcal{M}, X}\colon \mathcal{M}, v, \llbracket \varphi\rrbracket^{\mathcal{M}, X}\vDash \Box\beta \\
&\Rrightarrow &  \llbracket \varphi\rrbracket^{\mathcal{M}, X}\subseteq \llbracket \Box \beta\rrbracket^{\mathcal{M},\llbracket \varphi\rrbracket^{\mathcal{M}, X}} ,
\end{eqnarray*}
which implies (\ref{2}).

For \textbf{A4}, by similar reasoning to that for (\ref{2}), we have
\begin{eqnarray}
 \mathcal{M},w,X\vDash \varphi\Rightarrow (\alpha\vee \Diamond\beta) 
&\mbox{iff}& \llbracket \varphi\rrbracket^{\mathcal{M}, X}\subseteq \llbracket \alpha\rrbracket^{\mathcal{M},\llbracket \varphi\rrbracket^{\mathcal{M}, X}}\cup\llbracket\Diamond\beta\rrbracket^{\mathcal{M},\llbracket \varphi\rrbracket^{\mathcal{M}, X}}.\label{4}
\end{eqnarray}
Using (\ref{4}), we show that if $ \mathcal{M},w,X\vDash \varphi\Rightarrow (\alpha\vee \Diamond\beta)$, then $ \mathcal{M},w,X\vDash \varphi\Rightarrow\alpha$ or $\mathcal{M},w,X\vDash \neg (\varphi\Rightarrow\neg\beta)$.

Case 1:  $\llbracket\Diamond\beta\rrbracket^{\mathcal{M},\llbracket \varphi\rrbracket^{\mathcal{M}, X}}=\varnothing$. Then as in Case 1 for \textbf{A3}, (\ref{4}) implies $\mathcal{M},w,X\vDash \varphi\Rightarrow\alpha$.

Case 2: $\llbracket\Diamond\beta\rrbracket^{\mathcal{M},\llbracket \varphi\rrbracket^{\mathcal{M}, X}}\neq\varnothing$. Then we observe that
\begin{eqnarray}
\llbracket\Diamond\beta\rrbracket^{\mathcal{M},\llbracket \varphi\rrbracket^{\mathcal{M}, X}}\neq\varnothing
&\mbox{iff} &\exists v\in \llbracket \varphi\rrbracket^{\mathcal{M}, X} \colon  \mathcal{M}, v, \llbracket \varphi\rrbracket^{\mathcal{M}, X}\vDash \Diamond\beta\nonumber\\
&\mbox{iff}& \llbracket \varphi\rrbracket^{\mathcal{M}, X}\cap \llbracket \beta\rrbracket^{\mathcal{M},\llbracket \varphi\rrbracket^{\mathcal{M}, X}}\neq \varnothing \nonumber \\
&\mbox{iff}& \llbracket \varphi\rrbracket^{\mathcal{M}, X}\not\subseteq W\setminus \llbracket \beta\rrbracket^{\mathcal{M},\llbracket \varphi\rrbracket^{\mathcal{M}, X}} \nonumber \\
&\mbox{iff}& \llbracket \varphi\rrbracket^{\mathcal{M}, X}\not\subseteq \llbracket \neg\beta\rrbracket^{\mathcal{M},\llbracket \varphi\rrbracket^{\mathcal{M}, X}} \nonumber \\
&\mbox{iff}& \mathcal{M},w,\llbracket \varphi\rrbracket^{\mathcal{M}, X}\nvDash \Box\neg\beta\nonumber \\
&\mbox{iff}& \mathcal{M},w,X\nvDash \varphi\Rightarrow\neg\beta\nonumber \\
&\mbox{iff}& \mathcal{M},w,X\vDash \neg(\varphi\Rightarrow\neg\beta).\label{5}
\end{eqnarray}

Finally, we show that if $\mathcal{M},w,X\vDash \varphi\Rightarrow \alpha$ or $\mathcal{M},w,X\vDash \neg(\varphi\Rightarrow\neg\beta)$, then $\mathcal{M},w,X\vDash \varphi\Rightarrow (\alpha\vee \Box\beta)$, using (\ref{4}). The argument starting from $\mathcal{M},w,X\vDash \varphi\Rightarrow \alpha$ is the same as for \textbf{A3}. The argument from $\mathcal{M},w,X\vDash \neg(\varphi\Rightarrow\neg\beta)$  uses (\ref{5}) and the observation that
\begin{eqnarray*}
\llbracket \varphi\rrbracket^{\mathcal{M}, X}\cap \llbracket \beta\rrbracket^{\mathcal{M},\llbracket \varphi\rrbracket^{\mathcal{M}, X}}\neq \varnothing
&\Rrightarrow &\forall v\in X\colon  \mathcal{M}, v, \llbracket \varphi\rrbracket^{\mathcal{M}, X}\vDash \Diamond\beta\\
&\Rrightarrow& \forall v\in \llbracket \varphi\rrbracket^{\mathcal{M}, X}\colon \mathcal{M}, v, \llbracket \varphi\rrbracket^{\mathcal{M}, X}\vDash \Diamond\beta \\
&\Rrightarrow &  \llbracket \varphi\rrbracket^{\mathcal{M}, X}\subseteq \llbracket \Diamond \beta\rrbracket^{\mathcal{M},\llbracket \varphi\rrbracket^{\mathcal{M}, X}} ,
\end{eqnarray*}
which implies (\ref{4}).
\end{proof}

Next we verify that the axioms of Lemma \ref{ValidAx} are indeed an equivalent repackaging of the axioms of the Yalcin logic.

\begin{lemma}\label{EquivAx}
A formula $\varphi$ is a theorem of the Yalcin logic iff $\varphi$ is a theorem of the logic defined in the same way as the Yalcin logic (in Theorem \ref{Ax}) but with axioms \textbf{A1}-\textbf{A4} in place of \textbf{I1}-\textbf{I7}. 
\end{lemma}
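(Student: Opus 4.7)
My plan is to prove both inclusions. Since the two logics share the same propositional and modal base (tautologies, the axioms \textbf{K}, \textbf{4}, \textbf{5}, and closure under modus ponens, necessitation, and replacement of equivalents), the task reduces to deriving each group of conditional axioms from the other.

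From \textbf{I1}--\textbf{I7} to \textbf{A1}--\textbf{A4}: axioms \textbf{A1} and \textbf{A2} are literally \textbf{I1} and \textbf{I2}. For the forward direction of \textbf{A3}, I would assume $\varphi\Rightarrow(\alpha\vee\Box\beta)$ and reason by cases on whether $\varphi\Rightarrow\beta$; if not, \textbf{I5} immediately gives $\varphi\Rightarrow\alpha$. For the backward direction, \textbf{I3} lifts $\varphi\Rightarrow\alpha$ to $\varphi\Rightarrow(\alpha\vee\Box\beta)$, while \textbf{I4} first promotes $\varphi\Rightarrow\beta$ to $\varphi\Rightarrow\Box\beta$, after which \textbf{I3} plus commutativity of $\vee$ under replacement yields the goal. \textbf{A4} is analogous, with \textbf{I6} playing the role of \textbf{I5} and \textbf{I7} (together with the replacement $\neg\neg\beta\leftrightarrow\beta$) playing the role of \textbf{I4}.

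From \textbf{A1}--\textbf{A4} to \textbf{I1}--\textbf{I7}: again \textbf{I1}~=~\textbf{A1} and \textbf{I2}~=~\textbf{A2}. The key intermediate step is a derived monotonicity rule: from $\vdash\alpha\to\gamma$ infer $\vdash(\varphi\Rightarrow\alpha)\to(\varphi\Rightarrow\gamma)$. This follows because $\vdash\alpha\to\gamma$ entails $\vdash\alpha\leftrightarrow(\alpha\wedge\gamma)$; replacement of equivalents then gives $(\varphi\Rightarrow\alpha)\leftrightarrow(\varphi\Rightarrow(\alpha\wedge\gamma))$, and \textbf{A2} reduces the right-hand side to $(\varphi\Rightarrow\alpha)\wedge(\varphi\Rightarrow\gamma)$. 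Applied to the tautology $\alpha\to(\alpha\vee\beta)$, monotonicity immediately delivers \textbf{I3}. For \textbf{I4} I would instantiate \textbf{A3} at $\alpha\mapsto\bot$ and use the replacement $\bot\vee\Box\beta\leftrightarrow\Box\beta$. \textbf{I5} and \textbf{I6} fall out of \textbf{A3} and \textbf{A4} respectively by disjunctive syllogism applied to the biconditionals. Finally, \textbf{I7} follows from \textbf{A4} instantiated at $\alpha\mapsto\bot$, $\beta\mapsto\neg\beta$, after the propositional replacements $\bot\vee\Diamond\neg\beta\leftrightarrow\Diamond\neg\beta$ and $\neg\neg\beta\leftrightarrow\beta$.

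No single step is conceptually deep; the real work is careful bookkeeping of substitution instances and replacement-of-equivalents steps, particularly the Boolean simplifications $\bot\vee\phi\equiv\phi$, $\Box\beta\vee\alpha\equiv\alpha\vee\Box\beta$, and $\neg\neg\phi\equiv\phi$ applied inside a $\varphi\Rightarrow(-)$ context. The monotonicity rule derived from \textbf{A2} is what does most of the heavy lifting in the \textbf{A}-to-\textbf{I} direction, so I would isolate and prove it first as a small sublemma and then reuse it wherever a step of the form $\varphi\Rightarrow\alpha\vdash\varphi\Rightarrow\gamma$ is needed.
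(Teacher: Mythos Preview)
Your proposal is correct and follows essentially the same route as the paper. The only presentational difference is that you isolate a general monotonicity rule from \textbf{A2} (from $\vdash\alpha\to\gamma$ infer $\vdash(\varphi\Rightarrow\alpha)\to(\varphi\Rightarrow\gamma)$) and then instantiate it to get \textbf{I3}, whereas the paper applies the underlying replacement-plus-\textbf{A2} trick directly, using the specific equivalence $\alpha\leftrightarrow(\alpha\vee\beta)\wedge(\alpha\vee\neg\beta)$; all other steps---the $\alpha\mapsto\bot$ instantiations for \textbf{I4} and \textbf{I7}, the observation that \textbf{I5}/\textbf{I6} are Boolean rearrangements of the forward directions of \textbf{A3}/\textbf{A4}, and the \textbf{I3}/\textbf{I4}/\textbf{I7} combinations for the backward directions---match the paper exactly.
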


\begin{proof}Axioms \textbf{I1} and \textbf{I2} are just axioms \textbf{A1} and \textbf{A2}, so we begin with \textbf{I3}. As an instance of \textbf{A2}, $(\varphi\Rightarrow (\delta\wedge\gamma))\leftrightarrow ((\varphi\Rightarrow\delta)\wedge(\varphi\Rightarrow\gamma))$,  we have
\[(\varphi\Rightarrow ((\alpha\vee\beta)\wedge (\alpha\vee\neg\beta)))\leftrightarrow ((\varphi\Rightarrow (\alpha\vee\beta))\wedge (\varphi\Rightarrow (\alpha\vee\neg\beta))).\]
Since $(\alpha\vee\beta)\wedge (\alpha\vee\neg\beta)$ is equivalent to $\alpha$, by replacement of equivalents the left-to-right direction of the biconditional gives us $(\varphi\Rightarrow \alpha)\to (\varphi\Rightarrow (\alpha\vee\beta))$, which is \textbf{I3}.

For \textbf{I4},  as an instance of \textbf{A3}, $(\varphi\Rightarrow (\alpha\vee \Box\beta))\leftrightarrow ((\varphi\Rightarrow \alpha)\vee (\varphi\Rightarrow\beta))$, we have 
\[(\varphi\Rightarrow (\bot\vee\Box\beta))\leftrightarrow ((\varphi\Rightarrow \bot)\vee (\varphi\Rightarrow\beta)).\]
Since $\bot\vee\Box\beta$ is equivalent to $\Box\beta$, by replacement of equivalents the right-to-left direction of the biconditional gives us $(\varphi\Rightarrow\beta)\to (\varphi\Rightarrow \Box\beta)$, which is \textbf{I4}.

 \textbf{I5} is a Boolean rewriting of the left-to-right direction of \textbf{A3}, and \textbf{I6} is a Boolean rewriting of the left-to-right direction of \textbf{A4}.

Finally, for \textbf{I7}, as an instance of \textbf{A4}, $(\varphi\Rightarrow (\delta\vee \Diamond \gamma))\leftrightarrow ((\varphi\Rightarrow \delta)\vee \neg (\varphi\Rightarrow\neg\gamma))$, we have
\[(\varphi\Rightarrow (\bot\vee\Diamond\neg\beta))\leftrightarrow ((\varphi\Rightarrow\bot)\vee\neg(\varphi\Rightarrow\neg\neg\beta)),\]
which by replacement of equivalents is equivalent to 
\[(\varphi\Rightarrow \Diamond\neg\beta)\leftrightarrow ((\varphi\Rightarrow\bot)\vee\neg(\varphi\Rightarrow\beta)),\]
the right-to-left direction of which gives us $\neg (\varphi\Rightarrow\beta)\to (\varphi \Rightarrow \Diamond \neg \beta)$, which is \textbf{I7}.

In the other direction, given the observations above, it only remains to show that the right-to-left directions of \textbf{A3} and \textbf{A4} are theorems of the Yalcin logic, which by Boolean reasoning reduces to showing that the following four formulas are theorems of the Yalcin logic:
\begin{eqnarray}
&&(\varphi\Rightarrow\alpha)\rightarrow (\varphi\Rightarrow (\alpha\vee\Box\beta))\label{1of4}\\
&&(\varphi\Rightarrow\beta)\rightarrow (\varphi\Rightarrow (\alpha\vee\Box\beta))\label{2of4} \\
&&(\varphi\Rightarrow\alpha)\rightarrow (\varphi\Rightarrow (\alpha\vee\Diamond\beta))\label{3of4}\\
&&\neg(\varphi\Rightarrow\neg\beta)\rightarrow (\varphi\Rightarrow (\alpha\vee\Diamond\beta)).\label{4of4}
\end{eqnarray}
Formula (\ref{1of4}) is an instance of \textbf{I3}. For (\ref{2of4}), as instances of \textbf{I4} and \textbf{I3}, we have:
\begin{eqnarray*}
&&(\varphi\Rightarrow\beta)\rightarrow(\varphi\Rightarrow\Box\beta)\\
&&(\varphi\Rightarrow\Box\beta)\rightarrow(\varphi\Rightarrow(\Box\beta\vee\alpha)),
\end{eqnarray*}
and we can use replacement of equivalents to replace $\Box\beta\vee\alpha$ by $\alpha\vee\Box\beta$. Next, (\ref{3of4}) is an instance of \textbf{I3}. Finally, for (\ref{4of4}), as an instance of \textbf{I7}, we have
\[\neg(\varphi\Rightarrow\neg\beta)\rightarrow(\varphi\Rightarrow \Diamond\neg\neg\beta)\]
and hence
\[\neg(\varphi\Rightarrow\neg\beta)\rightarrow(\varphi\Rightarrow \Diamond\beta)\]
by replacement of equivalents; and as an instance of \textbf{I3}, we have
\[(\varphi\Rightarrow \Diamond\beta)\rightarrow (\varphi\Rightarrow (\Diamond\beta\vee\alpha))\]
and hence 
\[(\varphi\Rightarrow \Diamond\beta)\rightarrow (\varphi\Rightarrow (\alpha\vee\Diamond\beta))\]
by replacement of equivalents. Putting the above implications together, we can derive (\ref{4of4}).
\end{proof}

For the next step in our argument, we use the following well-known fact about the modal logic \textsf{K45}, which is exactly the set of theorems of the Yalcin logic restricted to the language $\mathcal{L}$.

\begin{lemma}\label{NormalForm} Every formula $\varphi\in\mathcal{L}$ is provably equivalent in \textsf{K45} to a formula of the form 
\[\underset{1\leq i\leq n}{\bigwedge} (\pi^i\vee \Diamond\beta^i\vee \Box\beta^i_1\vee\dots\vee \Box \beta^i_{m_i}),\]
where $\pi^i$, $\beta^i$, and $\beta^i_1,\dots,\beta^i_{m_i}$ are nonmodal formulas, and to a formula of the form 
\[\underset{1\leq i\leq n}{\bigvee} (\pi^i\wedge \Box\beta^i\wedge \Diamond\beta^i_1\wedge\dots\wedge \Diamond \beta^i_{m_i}),\]
where $\pi^i$, $\beta^i$, and $\beta^i_1,\dots,\beta^i_{m_i}$ are nonmodal formulas.
\end{lemma}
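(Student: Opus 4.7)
The plan is to proceed by induction on the modal depth of $\varphi$: first reduce $\varphi$ to an equivalent formula of modal depth at most $1$, and then finish by propositional CNF or DNF. The base case (modal depth $0$) is immediate, since a nonmodal $\varphi$ in propositional CNF (resp.\ DNF) already matches either form, with the modal disjuncts/conjuncts degenerating to $\Diamond\bot$ and $\Box\bot$ (resp.\ $\Box\top$ and $\Diamond\top$).

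For the inductive step, I would first derive in \textsf{K45} the flattening principles $\Box\Box\varphi \leftrightarrow \Box\varphi$ and $\Diamond\Diamond\varphi \leftrightarrow \Diamond\varphi$ (one direction of each is axiom \textbf{4}; the converse follows from \textbf{5} and \textbf{K} by a short argument, using that $\Box\Box\varphi \wedge \Diamond\neg\varphi$ forces $\Box\bot$ via $\Diamond\neg\varphi \to \Box\Diamond\neg\varphi$), together with the decidedness principles $\mu \to \Box\mu$ and $\neg\mu \to \Box\neg\mu$ for any $\mu$ beginning with $\Box$ or $\Diamond$ (immediate instances of \textbf{4} and \textbf{5}). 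These yield the key distribution-like schema $\Box(\varphi \vee \mu) \leftrightarrow \Box\varphi \vee \mu$ for \textsf{K45}-decided $\mu$, and dually $\Diamond(\varphi \wedge \mu) \leftrightarrow \Diamond\varphi \wedge \mu$. Applied clause by clause to $\Box$ of the CNF (via \textbf{K}-distribution over $\wedge$) delivered by the induction hypothesis, these schemata absorb every embedded modal subformula through the outer $\Box$, producing a modal-depth-$1$ formula; the dual argument handles $\Diamond$ of a DNF.

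Having reduced to modal depth $\leq 1$, what remains is propositional bookkeeping. I would put $\varphi$ into propositional CNF (resp.\ DNF), rewrite $\neg\Box\beta$ as $\Diamond\neg\beta$ and $\neg\Diamond\beta$ as $\Box\neg\beta$ to remove negated modalities, collect the propositional literals of each clause/conjunct into a single nonmodal $\pi^i$, and combine any multiple $\Diamond$-disjuncts in one CNF-clause via $\Diamond\beta \vee \Diamond\gamma \leftrightarrow \Diamond(\beta \vee \gamma)$ into a single $\Diamond\beta^i$ (the $\Box$-disjuncts do not combine and so remain as the list $\Box\beta^i_1 \vee \dots \vee \Box\beta^i_{m_i}$); dually, in each DNF-conjunct, multiple $\Box$-conjuncts combine via $\Box\beta \wedge \Box\gamma \leftrightarrow \Box(\beta \wedge \gamma)$ into a single $\Box\beta^i$, while the $\Diamond$-conjuncts remain separate. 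The main obstacle is really the depth-reduction step, where one must exploit the \textsf{K45}-specific decidedness of modal subformulas to push nested modalities past an outer one; once that is in hand, the rest of the argument is essentially routine propositional normal-form manipulation combined with the four $\Box/\Diamond$-distribution laws over $\wedge/\vee$.
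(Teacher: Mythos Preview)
The paper does not prove this lemma at all---it is introduced as a ``well-known fact'' about \textsf{K45} and then used without further justification. Your sketch supplies a correct proof along standard lines: the decidedness of modally-prefixed formulas in \textsf{K45} (the principles $\mu\to\Box\mu$ and $\neg\mu\to\Box\neg\mu$ for $\mu$ of the form $\Box\chi$ or $\Diamond\chi$, which are immediate from axioms \textbf{4} and \textbf{5}) yields the extraction schema $\Box(\varphi\vee\mu)\leftrightarrow\Box\varphi\vee\mu$, and iterating this over a conjunctive normal form flattens any $\Box\psi$ to modal depth one, with the dual argument handling $\Diamond\psi$ over a disjunctive normal form; the remaining propositional bookkeeping is as you describe.

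Two minor remarks. First, the induction is most cleanly organized as a structural induction on $\varphi$, with the hypothesis asserting \emph{both} normal forms simultaneously, since the $\Box$ case consumes the CNF of its argument while the $\Diamond$ case consumes the DNF. Second, in the base case you can simply take $m_i=0$ rather than padding with $\Box\bot$ or $\Diamond\top$: those are not neutral in \textsf{K45} (the information state may be empty, making $\Box\bot$ true and $\Diamond\top$ false), whereas $\Diamond\bot$ and $\Box\top$ are safe placeholders for the single required $\Diamond$- or $\Box$-slot.
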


Now we take advantage of Lemmas \ref{EquivAx} and \ref{NormalForm} to show that using the Yalcin logic, every formula containing conditionals  can be reduced to an equivalent formula without conditionals. 

\begin{lemma}\label{Reduction} Every formula of $\mathcal{L}(\Rightarrow)$ is provably equivalent in the Yalcin logic to a formula of $\mathcal{L}$.
\end{lemma}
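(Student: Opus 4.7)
The proof goes by induction on the number of occurrences of $\Rightarrow$ in a formula. The base case is immediate, since a formula with no $\Rightarrow$ already belongs to $\mathcal{L}$. For the inductive step, given $\varphi \in \mathcal{L}(\Rightarrow)$ with at least one occurrence of $\Rightarrow$, I would locate an \emph{innermost} occurrence, so that the offending subformula has the form $\alpha \Rightarrow \beta$ with $\alpha,\beta \in \mathcal{L}$. If I can produce a formula $\chi \in \mathcal{L}$ such that $(\alpha \Rightarrow \beta) \leftrightarrow \chi$ is a theorem of the Yalcin logic, then by closure under replacement of equivalents I may substitute $\chi$ for $\alpha \Rightarrow \beta$ in $\varphi$, obtaining a theorem-equivalent formula with strictly fewer occurrences of $\Rightarrow$; the induction hypothesis then delivers an $\mathcal{L}$-formula equivalent to the whole thing.

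To carry out the reduction of $\alpha \Rightarrow \beta$ with $\alpha,\beta \in \mathcal{L}$, I would first invoke Lemma \ref{NormalForm} to rewrite $\beta$, provably in \textsf{K45} and hence (by Lemma \ref{K45comp} plus the fact that the Yalcin logic extends \textsf{K45}) provably in the Yalcin logic, as a conjunction $\bigwedge_i (\pi^i \vee \Diamond\beta^i \vee \Box\beta^i_1 \vee \dots \vee \Box\beta^i_{m_i})$ with all components nonmodal. By Lemma \ref{EquivAx} the reduction axioms \textbf{A1}--\textbf{A4} are available. Iterated application of \textbf{A2} distributes $\alpha \Rightarrow (\cdot)$ over the outer conjunction, so it suffices to reduce each conjunct $\alpha \Rightarrow (\pi \vee \Diamond\gamma \vee \Box\gamma_1 \vee \dots \vee \Box\gamma_m)$ to a formula of $\mathcal{L}$.

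For a single clause, I would repeatedly apply \textbf{A3} to peel off the $\Box\gamma_j$ disjuncts one at a time, each step replacing a subformula of the form $\alpha \Rightarrow (\delta \vee \Box\gamma_j)$ by $(\alpha \Rightarrow \delta) \vee (\alpha \Rightarrow \gamma_j)$, where $\gamma_j$ is nonmodal. Once the $\Box$-disjuncts are exhausted, a single application of \textbf{A4} converts the remaining $\alpha \Rightarrow (\pi \vee \Diamond\gamma)$ into $(\alpha \Rightarrow \pi) \vee \neg(\alpha \Rightarrow \neg\gamma)$. At this point every surviving conditional has the shape $\alpha \Rightarrow \rho$ with $\rho$ nonmodal, and since $\alpha \in \mathcal{L}$ the axiom \textbf{A1} rewrites each such conditional as $\Box(\alpha \to \rho) \in \mathcal{L}$. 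The resulting Boolean combination of $\mathcal{L}$-formulas is itself an $\mathcal{L}$-formula, completing the reduction of $\alpha \Rightarrow \beta$.

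The main obstacle is only careful bookkeeping: one must reparenthesize the disjunctions supplied by Lemma \ref{NormalForm} at each stage so that \textbf{A3} (stated for a binary disjunction with a $\Box$ on the right) and \textbf{A4} (with a $\Diamond$ on the right) actually fire, and one must handle degenerate clauses---for example, a clause that collapses to $\alpha \Rightarrow \bot$ is transformed by \textbf{A1} into $\Box\neg\alpha$, and a clause lacking a $\Diamond$-disjunct can be treated by \textbf{A3} alone or padded with $\Diamond\bot$. Each rewriting step is justified by replacement of equivalents, so no deeper difficulty arises beyond sequencing these equivalences correctly and then invoking the outer induction hypothesis.
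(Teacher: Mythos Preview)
Your proposal is correct and follows essentially the same route as the paper: reduce innermost conditionals using the \textsf{K45} normal form for the consequent, distribute over the conjunction via \textbf{A2}, then peel off the modal disjuncts with \textbf{A3} and \textbf{A4} until only nonmodal consequents remain, which \textbf{A1} eliminates. The only cosmetic difference is that the paper strips the single $\Diamond$-disjunct first (via \textbf{A4}) and then handles the $\Box$-disjuncts by an inner induction on $m$ (via \textbf{A3}), whereas you peel the $\Box$-disjuncts first and finish with \textbf{A4}; since both \textbf{A3} and \textbf{A4} place no restriction on the ``other'' disjunct $\alpha$, either order works.
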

\begin{proof} It suffices to show that any conditional formula \textit{containing no embedded conditionals} is equivalent to a formula of $\mathcal{L}$.  For in any formula containing conditionals, there must be ``innermost'' occurrences of conditional formulas containing no embedded conditionals, and then we can replace those occurrences of conditional formulas---using replacement of equivalents---with equivalent formulas of $\mathcal{L}$, repeating this process until we eventually obtain a formula containing no conditionals. For a rigorous treatment of such a reduction argument, see \cite[\S7.4]{Ditmarsch2008}.

In what follows, we use the fact from Lemma \ref{EquivAx} that \textbf{A1}--\textbf{A4} are derivable in the Yalcin logic.

Given a conditional formula $\varphi\Rightarrow\psi$ with no conditionals in $\varphi$ or $\psi$, we use the fact from Lemma \ref{NormalForm} that $\psi$ is equivalent to a formula $\psi'$ in \textsf{K45} conjunctive normal form:
\[\psi':=\underset{1\leq i\leq n}{\bigwedge} (\pi^i\vee \Diamond\beta^i\vee \Box \beta^i_1\vee\dots\vee \Box\beta_{m_i}^i).\]
By replacement of equivalents, $\varphi\Rightarrow\psi$ is equivalent to $\varphi\Rightarrow\psi'$. By repeated application of \textbf{A2}, $\varphi\Rightarrow\psi'$ is equivalent to
\[\underset{1\leq i\leq n}{\bigwedge}(\varphi\Rightarrow (\pi^i\vee \Diamond\beta^i\vee \Box \beta^i_1\vee\dots\vee \Box\beta_{m_i}^i)).\]
So it suffices to show that each formula of the form
\[\varphi\Rightarrow (\pi\vee \Diamond\beta\vee \Box \beta_1\vee\dots\vee \Box\beta_m)\]
is equivalent to a formula of $\mathcal{L}$. Let
\[\alpha:=\pi\vee \Box \beta_1\vee\dots\vee \Box\beta_m. \]
By replacement of equivalents, we can equivalently write the above conditional as 
\[\varphi\Rightarrow (\alpha\vee\Diamond\beta),\]
which is equivalent, by \textbf{A4}, to
\[(\varphi\Rightarrow \alpha)\vee \neg(\varphi\Rightarrow\neg\beta),\]
which is in turn equivalent, by \textbf{A1}, to
\[(\varphi\Rightarrow \alpha)\wedge \neg\Box(\varphi\to \neg\beta),\]
since $\beta$ is nonmodal. Now it suffices to show that each formula of the form 
\[\varphi\Rightarrow (\pi\vee \Box \beta_1\vee\dots\vee \Box\beta_m) \]
is equivalent to a formula of $\mathcal{L}$. We do so by induction on $m$. Now let 
\[\alpha:= \pi\vee\Box\beta_1\vee\dots\vee\Box\beta_{m-1},\]
so we can write the above conditional as 
\[\varphi\Rightarrow (\alpha\vee\Box\beta_m),\]
which is equivalent, by \textbf{A3}, to 
\[(\varphi\Rightarrow\alpha)\wedge (\varphi\Rightarrow\beta_m),\]
which is in turn equivalent, by \textbf{A1}, to
\[(\varphi\Rightarrow\alpha)\wedge \Box(\varphi\to\beta_m),\]
because $\beta_m$ is nonmodal. By the inductive hypothesis, the first disjunct is equivalent to a formula of $\mathcal{L}$. This completes the proof that our original formula $\varphi\Rightarrow\psi$ is equivalent to a formula of $\mathcal{L}$.
\end{proof}

We can now put everything together to prove Theorem \ref{Ax}.

\begin{proof} For soundness, if $\varphi$ is derivable in the Yalcin logic, then by Lemma \ref{EquivAx}, $\varphi$ is derivable in the logic defined with the axioms of Lemma \ref{ValidAx}, which we have shown to be valid. Since the rules also preserve validity, $\varphi$ is valid. For completeness, if $\varphi$ is valid, then so is its equivalent $\varphi'$ in the modal language $\mathcal{L}$ provided by Lemma \ref{Reduction}, given soundness. By the completeness of \textsf{K45} in Lemma \ref{K45comp}, $\varphi'$ is provable in \textsf{K45} and hence in the Yalcin logic, and by Lemma \ref{Reduction}, the Yalcin logic proves $\varphi\leftrightarrow\varphi'$. Thus, the Yalcin logic proves $\varphi$. Finally, it is easy to see that $\varphi$ is an informational consequence of $\{\sigma_1,\dots,\sigma_n\}$ iff $(\Box\sigma_1\wedge\dots\wedge\Box\sigma_n)\to\Box\varphi$ is valid according to Yalcin's semantics and hence iff $(\Box\sigma_1\wedge\dots\wedge\Box\sigma_n)\to\Box\varphi$ is a theorem of the Yalcin logic by our previous reasoning. \end{proof}

\section{From Formal Semantics to DEL}\label{SemanticsToDEL}

In this section, we consider an alternative semantics for the indicative conditional, suggested in the formal semantics literature. This semantics is also of interest purely from the perspective of dynamic epistemic logic, but it has not been previously considered in the DEL literature.

The semantics for the indicative $\Rightarrow$ in \S\ref{DELtoLanguage} was intended for the case where the antecedent is \textit{nonmodal}. If the antecedent is modal, things become trickier, for the well-known reason in dynamic epistemic logic \cite{Ditmarsch2008,HI2010} that updating with the antecedent $\varphi$ may fail to bring about an information state that accepts $\varphi$. The desire is to update the information state in such a way that the antecedent $\varphi$ is accepted and then check whether the consequent $\psi$ is accepted. However, we cannot say that the information state $X$ should be updated to the largest $X'\subseteq X$ that accepts the antecedent $\varphi$, because even if there is such a subset, there might fail to be a unique \textit{largest} one. A solution from Kolodny and MacFarlane \cite{MacFarlane2010} is the following:
\begin{itemize}
\item $\mathcal{M},w,X\vDash \varphi\Rightarrow\psi$ iff $\mathcal{M},w,X'\vDash \Box\psi$ for all $X'$ such that \\ (i) $X'\subseteq X$, (ii) $X'\subseteq\llbracket \varphi\rrbracket^{\mathcal{M},X'}$, and (iii) there is no $X''$ satisfying (i) and (ii) such that $X'\subsetneq X''$.
\end{itemize}

Although this semantics is equivalent to that of \S\ref{DELtoLanguage} for nonmodal $\varphi$, they are not equivalent in general.

\begin{example}\label{Ex2} Consider a two-world model $\mathcal{M}$ with $w$ and $v$ such that $p$ is true only at $w$ and $q$ is true only at $v$. According to the semantics of \S\ref{DELtoLanguage}, the formula $(\Box p\vee \Box \neg p)\Rightarrow  q$ is trivially true at $\mathcal{M},w,\{w,v\}$, because $\llbracket\Box p\vee\Box\neg p\rrbracket^{\mathcal{M},\{w,v\}}=\varnothing$. By contrast, according to the Kolodny-MacFarlane semantics,  the formula $(\Box p\vee \Box \neg p)\Rightarrow  q$ is false at $\mathcal{M},w,\{w,v\}$, because $\{w\}$ is an $X'\subseteq \{w,v\}$ satisfying conditions (i), (ii), and (iii), and yet $\mathcal{M},w,\{w\}\nvDash \Box q$.
\end{example}

\begin{example}\label{Ex3} According to the Kolodny-MacFarlane semantics, $(p\wedge\Diamond \neg p)\Rightarrow\bot$ is valid, because there can be no nonempty $X'$ such that $X'\subseteq\llbracket p\wedge\Diamond\neg p\rrbracket^{\mathcal{M},X'}$. By contrast, according to the semantics of \S\ref{DELtoLanguage}, $(p\wedge\Diamond\neg p)\Rightarrow\bot$ is invalid, because on that semantics $(p\wedge\Diamond\neg p)\Rightarrow\varphi$ is equivalent to $p\Rightarrow\varphi$ for any $\varphi$.
\end{example}

From the point of view of dynamic epistemic logic, the above semantics can be seen as interpreting $\varphi\Rightarrow \psi$ as a very natural statement: ``every minimal epistemic change yielding knowledge of $\varphi$ also yields knowledge of $\psi$.'' As Example \ref{Ex2} shows, there may be more than one minimal epistemic change yielding knowledge of $\varphi$, and as Example \ref{Ex3} shows, there may be no epistemic change yielding knowledge of $\varphi$. The standard interpretation of $[\varphi]\Box\psi$ in dynamic epistemic logic, as ``information update with $\varphi$ yields knowledge of $\psi$,'' is quite different, due to the phenomenon of \textit{unsuccessful update} alluded to above, wherein updating with $\varphi$ fails to produce an epistemic state in which $\varphi$ is known. The difference between $[\varphi]\Box\psi$ under the standard semantics and $\varphi\Rightarrow\psi$ under the Kolodny-MacFarlane semantics is that in the first case, $\varphi$ specifies the \textit{mechanism of epistemic change}---remove from the information state all worlds that  satisfied $\neg\varphi$---whereas in the second case, $\varphi$ specifies the \textit{desired result of epistemic change}---an epistemic state in which $\varphi$ is known. We think that both ways of reasoning are important for dynamic epistemic logicians interested in information update and learning. There are also natural generalizations of the Kolodny-MacFarlane semantics to the multi-agent setting in which dynamic epistemic logicians typically work (e.g., every minimal epistemic change yielding common knowledge of $\varphi$ also yields common knowledge of $\psi$), but in this preliminary analysis we restrict attention to the single-agent case.

Among the interesting application of the Kolodny-MacFarlane conditional is to succinctly express \textit{dependence} as in modal dependence logic \cite{Vaananen2008} or \textit{supervenience} as in philosophy \cite{McLaughlin2014}.

\begin{definition} In an information state $X$, the (truth value of the) propositional variable $q$ \textit{depends on} or \textit{supervenes on} the (truth values of the) propositional variables $p_1,\dots,p_n$ iff any two worlds in $X$ that agree on the truth values of $p_1,\dots,p_n$ also agree on the truth value of $q$. 
\end{definition}

The dependence of $q$ on $p_1,\dots,p_n$ can be expressed in the language $\mathcal{L}$ by the formula 
\begin{equation}\underset{s\in \mathsf{state}(p_1,\dots,p_n)}{\bigwedge} (\Box (s\to q) \vee \Box (s\to\neg q)),\label{Expo}\end{equation} where $\mathsf{state}(p_1,\dots,p_n)$ is the set of all conjunctions of the form $\pm_1 p_1\wedge\dots\wedge \pm_n p_n$ with $\pm_i$ being either $\neg$ or empty. But whereas the formula (\ref{Expo}) grows exponentially in the size of $n$, the Kolodny-MacFarlane conditional allows us to express dependence with a formula that grows only linearly in the size of $n$.

\begin{proposition} In an information state $X$, the propositional variable $q$ depends on $p_1,\dots,p_n$ iff the following formula is true relative to $X$ according to the Kolodny-MacFarlane semantics for $\Rightarrow$:
\begin{equation}(\underset{1\leq i\leq n}{\bigwedge}(\Box p_i\vee \Box \neg p_i) )\Rightarrow (\Box q\vee\Box\neg q).\label{Depend}\end{equation}
\end{proposition}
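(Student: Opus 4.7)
The strategy is to analyze which subsets $X' \subseteq X$ are picked out as ``maximal'' by clauses (i)--(iii) of the Kolodny-MacFarlane clause when the antecedent is $\theta := \bigwedge_{i}(\Box p_i \vee \Box \neg p_i)$, and then see what the consequent $\chi := \Box q \vee \Box \neg q$ demands of those subsets. The key observation is a form of \emph{rigidity}: for any information state $Y$ and any world $v$, whether $\mathcal{M}, v, Y \vDash \Box p_i \vee \Box \neg p_i$ does not depend on $v$ at all, and is equivalent to saying $Y$ is ``$p_i$-homogeneous'' (all worlds in $Y$ agree on $p_i$); likewise $\mathcal{M}, v, Y \vDash \chi$ iff $Y$ is $q$-homogeneous. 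Hence condition (ii), $X' \subseteq \llbracket \theta \rrbracket^{\mathcal{M}, X'}$, collapses to: $X' = \varnothing$, or else $X'$ is $p_i$-homogeneous for every $i$ (``$p$-homogeneous'' for short).

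Next I would identify the maximal such $X'$. For each $s \in \mathsf{state}(p_1, \dots, p_n)$, let $X_s := \{v \in X : \mathcal{M}, v, X \vDash s\}$; since $s$ is nonmodal, the $X_s$ partition $X$, each is $p$-homogeneous, and every $p$-homogeneous subset of $X$ lies inside exactly one $X_s$. Therefore the nonempty $X_s$ are precisely the maximal $p$-homogeneous subsets of $X$. Note that $\varnothing$ is not maximal when $X \neq \varnothing$, because any singleton subset of $X$ is trivially $p$-homogeneous.

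Finally I would combine the pieces: $\mathcal{M}, w, X \vDash \theta \Rightarrow \chi$ iff for every nonempty atomic cell $X_s$ of $X$, $\mathcal{M}, w, X_s \vDash \Box \chi$, which by the rigidity observation unfolds to ``$X_s$ is $q$-homogeneous.'' This condition---that within every atomic cell all worlds agree on $q$---is precisely the statement that any two worlds in $X$ agreeing on all of $p_1, \dots, p_n$ also agree on $q$, which is the definition of $q$ depending on $p_1, \dots, p_n$ in $X$. Since the world of evaluation $w$ does not affect whether $\Box \chi$ holds relative to any $X_s$, the truth of (\ref{Depend}) at $w$ is independent of $w \in X$, so ``true relative to $X$'' is unambiguous (and the degenerate case $X = \varnothing$ satisfies both sides vacuously). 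There is no genuine obstacle here: once one records the rigidity of $\theta$ and $\chi$ and enumerates the maximal $p$-homogeneous cells as the nonempty $X_s$, the equivalence with dependence is bookkeeping.
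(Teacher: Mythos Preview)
Your proposal is correct and follows essentially the same approach as the paper. The paper's proof also identifies, for worlds $x,y$ agreeing on $p_1,\dots,p_n$, the set of all worlds in $X$ agreeing with $x$ on the $p_i$'s---your $X_s$---and verifies it satisfies (i)--(iii); the only difference is that you package this as a complete characterization of the maximal $\theta$-subsets (the nonempty atomic cells $X_s$), which lets you treat both directions uniformly, whereas the paper dispatches the forward direction as ``easy to see'' and argues the converse pointwise from a given pair $x,y$.
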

\begin{proof}
It is easy to see that if $q$ depends on $p_1,\dots,p_n$ in $X$, then (\ref{Depend}) is true relative to $X$. Conversely, suppose (\ref{Depend}) is true relative to $X$, and $x$ and $y$ are worlds in $X$ that agree on the truth values of  $p_1,\dots,p_n$. Then we claim that the set $X'$ of all worlds in $X$ that agree with $x$ on the truth values of $p_1,\dots,p_n$ meets conditions (i), (ii), and (iii) above. For (ii), by the definition of $X'$, we have $\mathcal{M},w,X'\vDash \Box p_i\vee\Box\neg p_i$ for each $i$. For (iii), note that if $X'\subsetneq X''\subseteq X$, then there is a $z\in X''\setminus X'$ that disagrees with $x$ on the truth value of some  $p_i$, which with $x,z\in X''$ implies that $\mathcal{M},w,X''\nvDash \Box p_i\vee\Box\neg p_i$, so $X''$ does not satisfy (ii). Now since $X'$ satisfies (i), (ii), and (iii), and (\ref{Depend}) is true relative to $X$, it follows that $\mathcal{M},w,X'\vDash \Box q\vee\Box\neg q$, which implies that all worlds in $X'$, and in particular $x$ and $y$, agree on the truth value of $q$.
\end{proof}

Below we will provide a computable translation $(\cdot)^\dagger$ from the language $\mathcal{L}(\Rightarrow)$ into the basic modal language $\mathcal{L}$. As shown in Lemma \ref{Lem1}, this translation preserves (in)validity according to Kolodny and MacFarlane's semantics. By Lemma \ref{K45comp}
 and the fact that we have not changed the semantics from \S\ref{DELtoLanguage} for formulas without conditionals, this provides a full and faithful translation from the logic with epistemic modals and Kolodny and MacFarlane's indicative conditional to the logic \textsf{K45}. As is well-known, \textsf{K45} is decidable, so it follows that Kolodny and MacFarlane's logic is also decidable.

The strategy for our translation is as follows. We first define the translation $\lambda ^*$ for a conditional formula $\lambda$ with no embedded conditionals, using \textsf{K45} normal forms (Lemma \ref{NormalForm}). We then extend this to a translation from the full language by induction.

Suppose we are given a conditional formula
\[\lambda := \Theta \Rightarrow \Omega \]
where $\Theta$ and $\Omega$ are in \textsf{K45} \emph{disjunctive} normal form, so that \[\Theta:=\underset{i\in I}\bigvee\theta_i\quad\mbox{ and }\quad\Omega:=\underset{j\in J}\bigvee\omega_j,\] where  for nonmodal formulas $\varphi_i$, $\psi_i$, $\chi_n$, $\alpha_j$, $\beta_j$, and $\gamma_m$,
\[ \theta_i:= \varphi_i\wedge \Box\psi_i \wedge \underset{n\in D_i}{\bigwedge} \Diamond\chi_n \quad\mbox{ and }\quad \omega_j:= \alpha_j\wedge \Box\beta_j \wedge \underset{m\in D_j}{\bigwedge} \Diamond\gamma_m.\]
We would like our translation $\lambda^*$ to express in the basic modal language that every maximal set of worlds making some of the $\theta_i$ formulas true also makes at least one of the $\omega_j$ formulas true.  Where $K$ indexes some subset of the $\theta_i$ formulas,  $\mathtt{info}_K$ below gives the nonmodal information that each world in an information state must satisfy in order for the information state to accept $\bigvee_{i \in K} \theta_i$, while $\mathtt{good}_K$ asserts that $\bigvee_{i \in K} \theta_i$ will indeed be accepted when we restrict the current information state to the set of worlds satisfying $\mathtt{info}_K$, thanks to sufficient witnesses for the $\Diamond \chi_n$ formulas. Meanwhile, $\mathtt{max}_K$ adds that $K$ is maximal, with respect to the set of worlds satisfying $\mathtt{info}_K$, among such subsets. For $K\subseteq I$, let 
\[\mathtt{info}_K:= (\underset{k\in K}{\bigvee} \varphi_k )\wedge \underset{k\in K}{\bigwedge}\psi_k,\] 
let 
\[\mathtt{good}_K:= \underset{k\in K}{\bigwedge} \,\underset{n \in D_k}{\bigwedge} \Diamond (\mathtt{info}_K \wedge\chi_n),\]
and let
\begin{eqnarray*}
&&\mathtt{max}_K:= \mathtt{good}_K\wedge \underset{L\subseteq I}{\bigwedge}\Big(\big(\Box(\mathtt{info}_K\rightarrow \mathtt{info}_L)\wedge  \Diamond (\neg \mathtt{info}_K\wedge\mathtt{info}_L)\big)\rightarrow \neg\mathtt{good}_L \Big) .\end{eqnarray*}Concerning the consequent formula $\Omega$: for $S\subseteq J$, let 
\[\mathtt{state}_S:= \underset{s\in S}{\bigwedge}\alpha_s\wedge \underset{s\in J\setminus S}{\bigwedge}\neg\alpha_s.\]
Finally, given our $\lambda$ above, we define
\begin{eqnarray*}
\lambda^*&:=& \underset{K\subseteq I}{\bigwedge} \Bigg(\mathtt{max}_K\rightarrow \Box \bigg(\mathtt{info}_K \rightarrow \underset{S\subseteq J}{\bigwedge} \Big(\mathtt{state}_S\rightarrow \underset{s\in S}{\bigvee}\big(\Box (\mathtt{info}_K\rightarrow \beta_s)\wedge \underset{m \in D_s}{\bigwedge}\Diamond(\mathtt{info}_K \wedge \gamma_m) \big) \Big)\bigg) \Bigg).\end{eqnarray*}
As we show in Lemma \ref{Lem1}.\ref{Lem1c} below, if $\mathtt{max}_K$ is true, then (the truth set of) $\mathtt{info}_K$ picks out a maximal $\Theta$-accepting subset of the current information state. The translation thus guarantees that if we restrict the current information state to such a subset, then at least one of the disjuncts of $\Omega$ will be true at each world in the restricted information state. In particular, for at least one such formula $\omega_s$ we have that $\beta_s$ is true throughout the subset, so $\Box \beta_s$ is true, and each of the $\Diamond \gamma_m$ conjuncts of $\omega_s$ is witnessed by some world making $\gamma_m$ true. We also need to know that these formulas $\mathtt{info}_K$ for which $\mathtt{max}_K$ holds pick out \emph{all} of the maximal $\Theta$-accepting subsets, which is the content of Lemma \ref{Lem1}.\ref{Lem1b}.

We regard $(\cdot)^*$ as a partial function from $\mathcal{L}(\Rightarrow)$ to $\mathcal{L}$ such that $\varphi^*$ is defined iff $\varphi$ is of the form $\Theta\Rightarrow \Omega$ as above. We then define a partial function $(\cdot)^\dagger$ from $\mathcal{L}(\Rightarrow)$ to $\mathcal{L}$ as follows: 
\begin{itemize}
\item $p^\dagger = p$; $(\neg\varphi)^\dagger = \neg \varphi^\dagger$; $(\varphi\wedge\psi)^\dagger = \varphi^\dagger\wedge\psi^\dagger$; $(\Box\varphi)^\dagger = \Box \varphi^\dagger$;
\item $(\varphi \Rightarrow \psi)^\dagger =\begin{cases} \big((\varphi^\dagger)^{NF}\Rightarrow(\psi^\dagger)^{NF}\big)^* & \mbox{if this is defined} \\ \mbox{undefined} & \mbox{otherwise} \end{cases}$,
\end{itemize} 
where $(\chi)^{NF}$ is the \textsf{K45} disjunctive normal form of $\chi$. An easy induction shows that $(\cdot)^\dagger$ is in fact a total function, so 
\begin{itemize}
\item $(\varphi \Rightarrow \psi)^\dagger = \big((\varphi^\dagger)^{NF}\Rightarrow(\psi^\dagger)^{NF}\big)^*$,
\end{itemize}
and $\varphi^\dagger\in\mathcal{L}$ for every $\varphi\in\mathcal{L}(\Rightarrow)$.

\begin{theorem}[Reduction of $\mathcal{L}(\Rightarrow)$ to $\mathcal{L}$]\label{ReductionThm} For every $\delta\in\mathcal{L}(\Rightarrow)$:
\begin{enumerate}
\item $\delta^\dagger\in\mathcal{L}$;
\item for every pointed model $\mathcal{M},w,X$: $\mathcal{M},w,X\vDash \delta\mbox{ iff }\mathcal{M},w,X\vDash \delta^\dagger$.
\end{enumerate}
\end{theorem}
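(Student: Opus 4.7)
The plan is to prove both clauses simultaneously by induction on the complexity of $\delta$. Clause~(1)---that $\delta^\dagger\in\mathcal{L}$---is a routine induction once one notes that the operation $(\cdot)^*$ produces a formula in $\mathcal{L}$ whenever its inputs are in $\mathcal{L}$, so the substantive work lies entirely in clause~(2). For the atomic case $\delta=p$, for the Boolean cases, and for $\delta=\Box\varphi$, clause~(2) is immediate from the definition of $(\cdot)^\dagger$ together with the induction hypothesis; the $\Box$ case goes through because that clause passes the same information state $X$ to the successor worlds, so the IH applies at each of them.

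The conditional case $\delta=\varphi\Rightarrow\psi$ is the heart of the matter. Applying the induction hypothesis to $\varphi$ and $\psi$ gives $\llbracket\varphi\rrbracket^{\mathcal{M},Y}=\llbracket\varphi^\dagger\rrbracket^{\mathcal{M},Y}$ for every $\mathcal{M},Y$, and likewise for $\psi$; a direct substitution in the Kolodny--MacFarlane clause then yields $\varphi\Rightarrow\psi\equiv\varphi^\dagger\Rightarrow\psi^\dagger$. Since $\varphi^\dagger,\psi^\dagger\in\mathcal{L}$, they are \textsf{K45}-equivalent to the disjunctive normal forms supplied by Lemma~\ref{NormalForm}; because the Kolodny--MacFarlane and Yalcin semantics agree on $\mathcal{L}$ and \textsf{K45} is sound for the latter (Lemma~\ref{K45comp}), we have $\llbracket\varphi^\dagger\rrbracket^{\mathcal{M},Y}=\llbracket(\varphi^\dagger)^{NF}\rrbracket^{\mathcal{M},Y}$ for all $\mathcal{M},Y$, so the same substitution argument gives $\varphi\Rightarrow\psi\equiv(\varphi^\dagger)^{NF}\Rightarrow(\psi^\dagger)^{NF}$. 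It therefore suffices to prove, for $\Theta$ and $\Omega$ already in \textsf{K45} disjunctive normal form, that $(\Theta\Rightarrow\Omega)\equiv(\Theta\Rightarrow\Omega)^*$ at every pointed model.

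This last equivalence is the main obstacle. The strategy is to establish a bijection between maximal $\Theta$-accepting subsets $X'\subseteq X$ and subsets $K\subseteq I$ for which $\mathcal{M},w,X\vDash\mathtt{max}_K$, given by $X'=\{v\in X\mid \mathcal{M},v,X\vDash\mathtt{info}_K\}$; this is well-defined because $\mathtt{info}_K$ is nonmodal, so its truth at $v$ is independent of the information state. In one direction, assuming $\mathtt{max}_K$, one reads off from $\mathtt{good}_K$ the needed witnesses for each $\Diamond\chi_n$ and from $\mathtt{info}_K$ the $\varphi_k\wedge\Box\psi_k$ components to see that $X'$ accepts $\Theta$, then invokes the second conjunct of $\mathtt{max}_K$ to rule out any strictly larger good extension of $X'$ inside $X$. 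Conversely, given a maximal $\Theta$-accepting $X'$, one defines $K$ to be the set of indices $i\in I$ such that $\theta_i$ is accepted at some world of $X'$ and verifies that $X'$ coincides with the $\mathtt{info}_K$-subset of $X$ and that $\mathtt{max}_K$ holds. Once this correspondence is in place, unpacking $\lambda^*$ is mechanical: for each $K$ with $\mathtt{max}_K$ true and each $v\in X'$, the $\mathtt{state}_S$ guard isolates the set $S\subseteq J$ of indices $j$ with $\alpha_j$ true at $v$, and the inner disjunction over $s\in S$ asserts that some $\omega_s$ is accepted at $v$ relative to $X'$, with $\Box\beta_s$ and $\Diamond\gamma_m$ relativized to $X'$ expressed via $\Box(\mathtt{info}_K\to\beta_s)$ and $\Diamond(\mathtt{info}_K\wedge\gamma_m)$ respectively. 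Matching this against the Kolodny--MacFarlane clause for $\Theta\Rightarrow\Omega$ delivers the theorem.
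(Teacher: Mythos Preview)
Your proposal is correct and follows essentially the same route as the paper: induction on $\delta$, with the conditional case reduced via the induction hypothesis and Lemma~\ref{NormalForm} to showing $(\Theta\Rightarrow\Omega)\equiv(\Theta\Rightarrow\Omega)^*$ for $\Theta,\Omega$ in \textsf{K45} disjunctive normal form, which is then handled by the correspondence between maximal $\Theta$-accepting subsets and indices $K$ with $\mathtt{max}_K$ true. The paper packages this correspondence and the final unpacking of $\lambda^*$ as a separate Lemma~\ref{Lem1} and Proposition~\ref{mainProp}, but the content is the same as what you sketch inline.

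One small terminological point: the correspondence you describe is not literally a \emph{bijection}, since distinct $K,K'\subseteq I$ may determine the same set $\llbracket\mathtt{info}_K\rrbracket^{\mathcal{M},X}=\llbracket\mathtt{info}_{K'}\rrbracket^{\mathcal{M},X}$ and both satisfy $\mathtt{max}$. What the argument actually needs (and what the paper's Lemma~\ref{Lem1} establishes) is only that every $K$ with $\mathtt{max}_K$ yields a maximal $\Theta$-subset via $\mathtt{info}_K$, and conversely every maximal $\Theta$-subset arises this way from some such $K$; together with the observation that the inner condition in $\lambda^*$ depends only on $\llbracket\mathtt{info}_K\rrbracket^{\mathcal{M},X}$ and not on the particular $K$, this suffices. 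Your argument goes through once ``bijection'' is read as ``two-way correspondence.''
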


\begin{proof} The proof is by induction on $\delta$. The only nontrivial case is where $\delta$ is of the form $\varphi\Rightarrow \psi$, so we must show $\mathcal{M},w,X\vDash \varphi\Rightarrow\psi$ iff $\mathcal{M},w,X\vDash \big((\varphi^\dagger)^{NF}\Rightarrow(\psi^\dagger)^{NF}\big)^*$. By the inductive hypothesis, $\llbracket \varphi\rrbracket^{\mathcal{M},X}=\llbracket \varphi^\dagger\rrbracket^{\mathcal{M},X}$, and by Lemma \ref{NormalForm}, since $\varphi^\dagger\in\mathcal{L}$, $\llbracket \varphi^\dagger\rrbracket^{\mathcal{M},X}=\llbracket (\varphi^\dagger)^{NF}\rrbracket^{\mathcal{M},X}$. Thus, $\llbracket \varphi\rrbracket^{\mathcal{M},X}=\llbracket (\varphi^\dagger)^{NF}\rrbracket^{\mathcal{M},X}$, and similarly $\llbracket \psi\rrbracket^{\mathcal{M},X}=\llbracket (\psi^\dagger)^{NF}\rrbracket^{\mathcal{M},X}$. It follows that $\mathcal{M},w,X\vDash \varphi\Rightarrow\psi$ iff $\mathcal{M},w,X\vDash (\varphi^\dagger)^{NF}\Rightarrow(\psi^\dagger)^{NF}$. To complete the proof, it only remains to show that for any $\Theta\Rightarrow\Omega$ where $\Theta,\Omega\in\mathcal{L}$ are in normal form, we have $\mathcal{M},w,X\vDash \Theta\Rightarrow\Omega$ iff $\mathcal{M},w,X\vDash (\Theta\Rightarrow\Omega)^*$, as in Proposition \ref{mainProp} below. 
\end{proof}

As the proof of Theorem \ref{ReductionThm} shows, the key task is now to prove Proposition \ref{mainProp}, for which we need a preliminary lemma. From now on, we say that $Y\subseteq X$ is a \textit{$\Theta$-subset} of $X$ if $Y\subseteq \llbracket \Theta\rrbracket^{\mathcal{M},Y}$. 

\begin{lemma}\label{Lem1} For any pointed model $\mathcal{M},w,X$:
\begin{enumerate}
\item\label{Lem1a} if $\mathcal{M},w,X\vDash \mathtt{good}_L$, then $\llbracket \mathtt{info}_L\rrbracket^{\mathcal{M},X}$ is a $\Theta$-subset of $X$;
\item\label{Lem1b} if $Y$ is a maximal $\Theta$-subset of $X$, there is a $K\subseteq I$ such that $Y=\llbracket \mathtt{info}_K\rrbracket^{\mathcal{M},X}$ and $\mathcal{M},w,X\vDash \mathtt{max}_K$;
\item\label{Lem1c} if $\mathcal{M},w,X\vDash \mathtt{max}_L$, then $\llbracket \mathtt{info}_L\rrbracket^{\mathcal{M},X}$ is a maximal $\Theta$-subset of $X$.
\end{enumerate}
\end{lemma}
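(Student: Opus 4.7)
The plan is to prove the three clauses in order, with (1) providing the engine for both (2) and (3). For (1), I would set $Y := \llbracket \mathtt{info}_L\rrbracket^{\mathcal{M},X}$ and check directly that each $v \in Y$ satisfies some disjunct $\theta_k$ of $\Theta$ relative to $Y$: since $v\vDash \mathtt{info}_L$, fix $k\in L$ with $v\vDash\varphi_k$; every world in $Y$ satisfies $\psi_k$ by construction of $\mathtt{info}_L$, so $\mathcal{M},v,Y\vDash \Box\psi_k$; and for each $n\in D_k$, the $\mathtt{good}_L$ hypothesis supplies a witness $u\in X$ for $\Diamond(\mathtt{info}_L\wedge\chi_n)$, which lies in $Y$ because $\mathtt{info}_L$ is nonmodal. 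Hence $\mathcal{M},v,Y\vDash\theta_k$, so $Y$ is a $\Theta$-subset of $X$.

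For (2), given a maximal $\Theta$-subset $Y$, the natural candidate is $K := \{i\in I : \exists v\in Y,\ \mathcal{M},v,Y\vDash \theta_i\}$. The forward inclusion $Y\subseteq \llbracket\mathtt{info}_K\rrbracket^{\mathcal{M},X}$ is immediate: each $v\in Y$ witnesses some $i\in K$ with $v\vDash\varphi_i$, and each $\psi_k$ with $k\in K$ holds throughout $Y$ because some world in $Y$ witnesses the corresponding $\Box\psi_k$. For the reverse inclusion, I would argue by contradiction: any $u\in X$ with $u\vDash\mathtt{info}_K$ but $u\notin Y$ yields an enlargement $Y\cup\{u\}$ that is still a $\Theta$-subset, contradicting maximality. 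The two things to check are that each old $v\in Y$ retains its witnessing $\theta_i$ --- which works because $u\vDash\psi_k$ for every $k\in K$ (from $\mathtt{info}_K$), while the $\Diamond\chi_n$ witnesses already in $Y$ remain --- and that $u$ itself acquires a witnessing $\theta_k$, using the $k\in K$ with $u\vDash\varphi_k$ and reusing the $\chi_n$-witnesses already present in $Y$. The $\mathtt{good}_K$ conjunct of $\mathtt{max}_K$ then falls out from these same witnesses, and the maximality conjunct is forced by (1): if for some $L\subseteq I$ both $\Box(\mathtt{info}_K\to\mathtt{info}_L)$ and $\Diamond(\neg\mathtt{info}_K\wedge\mathtt{info}_L)$ held alongside $\mathtt{good}_L$, then (1) would produce a $\Theta$-subset $\llbracket\mathtt{info}_L\rrbracket^{\mathcal{M},X}$ strictly containing $Y$.

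Clause (3) combines (1) and (2). From $\mathtt{max}_L$ we have $\mathtt{good}_L$, so (1) makes $Y := \llbracket\mathtt{info}_L\rrbracket^{\mathcal{M},X}$ a $\Theta$-subset. If $Y$ were not maximal, I would extend it to a maximal $\Theta$-subset $Y^*$ (finiteness of the relevant model --- or Zorn's lemma --- suffices), apply (2) to obtain $K$ with $Y^*=\llbracket\mathtt{info}_K\rrbracket^{\mathcal{M},X}$ and $\mathtt{good}_K$, and observe that the strict inclusion $Y\subsetneq Y^*$ instantiates the conjunct of $\mathtt{max}_L$ indexed by $K$ to give $\neg\mathtt{good}_K$, a contradiction.

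The main obstacle is the reverse inclusion in (2), because enlarging $Y$ by a single world changes the information state at which the $\Box$ and $\Diamond$ conjuncts of each $\theta_i$ are evaluated, so one has to check that no previously verified $\theta_i$-witness is disturbed. What makes the argument go through is precisely the conjunct structure of $\mathtt{info}_K$: packaging the $\varphi_k$'s as a disjunction but the $\psi_k$'s as a conjunction ensures that any world added from $\llbracket\mathtt{info}_K\rrbracket^{\mathcal{M},X}$ preserves every previously verified $\Box\psi_k$, which is exactly what is needed.
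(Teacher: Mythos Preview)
Your proposal is correct and follows essentially the same approach as the paper: the same definition of $K$ in part~(2), the same contradiction via $Y\cup\{u\}$ for the reverse inclusion, and the same reduction of part~(3) to parts~(1) and~(2). If anything you are slightly more careful in part~(3), since the paper simply takes ``a $\Theta$-subset $Y\supsetneq \llbracket\mathtt{info}_L\rrbracket^{\mathcal{M},X}$'' and applies part~(2) without explicitly passing to a maximal one, whereas you note that an extension to a maximal $\Theta$-subset is needed (and available via Zorn, using finiteness of $I$ to see that unions of chains of $\Theta$-subsets remain $\Theta$-subsets).
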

\begin{proof} For part \ref{Lem1a}, suppose $\mathcal{M},w,X\vDash \mathtt{good}_L$. We must show that $\llbracket \mathtt{info}_L\rrbracket^{\mathcal{M},X}\subseteq\llbracket \Theta\rrbracket^{\mathcal{M}, \llbracket \mathtt{info}_L\rrbracket^{\mathcal{M},X}}$. So suppose $v\in \llbracket \mathtt{info}_L\rrbracket^{\mathcal{M},X}$. Then there is some $p\in L$ such that $v\in \llbracket \varphi_p\rrbracket^{\mathcal{M},X}$. We claim that $v\in\llbracket \theta_p \rrbracket ^{\mathcal{M}, \llbracket \mathtt{info}_L\rrbracket^{\mathcal{M},X}}$. By definition of $\mathtt{info}_L$, we have $\llbracket\mathtt{info}_L\rrbracket^{\mathcal{M},X}\subseteq \llbracket \underset{l\in L}{\bigwedge}\psi_l\rrbracket ^{\mathcal{M},X}$ and hence $\llbracket\mathtt{info}_L\rrbracket^{\mathcal{M},X}\subseteq \llbracket \underset{l\in L}{\bigwedge}\psi_l\rrbracket^{\mathcal{M}, \llbracket \mathtt{info}_L\rrbracket^{\mathcal{M},X}}$ since the $\psi_l$'s are nonmodal, so 
\begin{equation}\llbracket\mathtt{info}_L\rrbracket^{\mathcal{M},X}\subseteq \llbracket \underset{l\in L}{\bigwedge}\Box\psi_l\rrbracket^{\mathcal{M}, \llbracket \mathtt{info}_L\rrbracket^{\mathcal{M},X}}.\label{Eq1}\end{equation}
 Then since $\mathcal{M},w,X\vDash \mathtt{good}_L$, we also have \[\mathcal{M},w,X\vDash \underset{n\in D_p}{\bigwedge}\Diamond(\mathtt{info}_L\wedge \chi_n),\] which with the fact that the $\chi_n$'s are nonmodal implies  
\begin{equation}\llbracket\mathtt{info}_L\rrbracket^{\mathcal{M},X}\subseteq \llbracket \underset{n\in D_p}{\bigwedge}\Diamond\chi_n\rrbracket^{\mathcal{M}, \llbracket \mathtt{info}_L\rrbracket^{\mathcal{M},X}}.\label{Eq2}\end{equation} 
By $v\in \llbracket \varphi_p\rrbracket^{\mathcal{M},X}$, (\ref{Eq1}), and (\ref{Eq2}), we have $v\in\llbracket \theta_p \rrbracket ^{\mathcal{M}, \llbracket \mathtt{info}_L\rrbracket^{\mathcal{M},X}}$. Hence $\llbracket \mathtt{info}_L\rrbracket^{\mathcal{M},X}\subseteq\llbracket \Theta\rrbracket^{\mathcal{M}, \llbracket \mathtt{info}_L\rrbracket^{\mathcal{M},X}}$.
 
 For part \ref{Lem1b}, suppose $Y$ is a maximal $\Theta$-subset of $X$, so $Y\subseteq\llbracket \Theta\rrbracket^{\mathcal{M},Y}$ and there is no $Z$ such that $Y\subsetneq Z\subseteq X$ and $Z\subseteq\llbracket \Theta\rrbracket^{\mathcal{M},Z}$. Let 
\begin{equation}K= \{k\in I\mid  \llbracket \theta_k\rrbracket^{\mathcal{M},Y}\not=\varnothing\}.\label{Eq3}\end{equation}
We will show that $Y\subseteq\llbracket \mathtt{info}_K\rrbracket^{\mathcal{M},X}$. First, observe that for each $k\in K$, since $\llbracket \theta_k\rrbracket^{\mathcal{M},Y}\not=\varnothing$, we have $Y\subseteq\llbracket \psi_k\rrbracket^{\mathcal{M},X}$, so $Y\subseteq\llbracket \underset{k\in K}{\bigwedge}\psi_k\rrbracket^{\mathcal{M},X}$. Moreover, since $Y\subseteq\llbracket \Theta\rrbracket^{\mathcal{M},Y}$, for every $y\in Y$, there is a $k_y\in K$ with $y\in \llbracket \theta_{k_y}\rrbracket^{\mathcal{M},Y}$, so $y\in \llbracket \varphi_{k_y}\rrbracket^{\mathcal{M},X}$. Thus, for every $y\in Y$, $y\in \llbracket \varphi_{k_y}\wedge \underset{k\in K}{\bigwedge}\psi_k\rrbracket^{\mathcal{M},X}$. Hence $Y\subseteq\llbracket \mathtt{info}_K\rrbracket^{\mathcal{M},X}$.\\

Next we show that $Y\supseteq\llbracket \mathtt{info}_K\rrbracket^{\mathcal{M},X}$. Suppose not, so there is a $w\in \llbracket \mathtt{info}_K\rrbracket^{\mathcal{M},X}$ such that $w\not\in Y$. Since $w\in \llbracket \mathtt{info}_K\rrbracket^{\mathcal{M},X}$, we have  $w\in\llbracket \underset{k\in K}{\bigwedge} \psi_k \rrbracket^{\mathcal{M},X}$. Now we claim that $Y\cup\{w\}\subseteq \llbracket \Theta\rrbracket^{\mathcal{M},Y\cup \{w\}}$. Consider a $y\in Y$ and a disjunct $\theta_r$ of $\Theta$ such that $y\in\llbracket \theta_r\rrbracket^{\mathcal{M},Y}$, which exists since $Y\subseteq\llbracket \Theta\rrbracket^{\mathcal{M},Y}$. We claim that $y\in\llbracket \theta_r\rrbracket^{\mathcal{M},Y\cup\{w\}}$. First, clearly $y\in \llbracket \varphi_r \wedge \underset{n\in D_r}{\bigwedge}\Diamond\chi_n\rrbracket^{\mathcal{M},Y}$ implies $y\in\llbracket \varphi_r \wedge \underset{n\in D_r}{\bigwedge}\Diamond\chi_n\rrbracket^{\mathcal{M},Y\cup\{w\}}$. Then since $w\in\llbracket \underset{k\in K}{\bigwedge} \psi_k \rrbracket^{\mathcal{M},X}$, $y\in\llbracket \Box\psi_r\rrbracket^{\mathcal{M},Y}$ implies $y\in\llbracket \Box\psi_r\rrbracket^{\mathcal{M},Y\cup\{w\}}$. Thus, $y\in\llbracket \theta_r\rrbracket^{\mathcal{M},Y\cup\{w\}}$ and hence $y\in\llbracket \Theta\rrbracket^{\mathcal{M},Y\cup\{w\}}$, so we have shown that $Y\subseteq \llbracket \Theta\rrbracket^{\mathcal{M},Y\cup \{w\}}$. It only remains to observe that $w\in \llbracket \Theta\rrbracket^{\mathcal{M},Y\cup \{w\}}$. Since $w\in \llbracket \mathtt{info}_K\rrbracket^{\mathcal{M},X}$, there is a $t\in K$ such that $w\in\llbracket \varphi_t\rrbracket^{\mathcal{M},X}$. It is then easy to see that $w\in \llbracket \theta_t\rrbracket^{\mathcal{M},Y\cup\{w\}}$. Hence $Y\cup\{w\}\subseteq \llbracket \Theta\rrbracket^{\mathcal{M},Y\cup \{w\}}$, contradicting the assumption that $Y$ is a \textit{maximal} $\Theta$-subset of $X$.

Thus, we have shown that $Y=\llbracket \mathtt{info}_K\rrbracket^{\mathcal{M},X}$. Finally, we must show that $\mathcal{M},w,X\vDash \mathtt{max}_K$. To show that $\mathcal{M},w,X\vDash \mathtt{good}_K$, it follows from (\ref{Eq3}) that for every $k\in K$ and $n\in D_k$, $Y\cap \llbracket \gamma_n\rrbracket^{\mathcal{M},X}\not=\varnothing$, which with $Y=\llbracket \mathtt{info}_K\rrbracket^{\mathcal{M},X}$ gives us  $\llbracket \mathtt{info}_K\rrbracket^{\mathcal{M},X}\cap \llbracket \gamma_n\rrbracket^{\mathcal{M},X}\not=\varnothing$, so $\mathcal{M},w,X\vDash \mathtt{good}_K$. Finally, suppose $\mathcal{M},w,X\vDash \Box(\mathtt{info}_K\rightarrow \mathtt{info}_L)\wedge \Diamond (\neg \mathtt{info}_K\wedge\mathtt{info}_L)$, so $\llbracket \mathtt{info}_L\rrbracket^{\mathcal{M},X}\supsetneq \llbracket \mathtt{info}_K\rrbracket^{\mathcal{M},X}$. Then since $Y=\llbracket \mathtt{info}_K\rrbracket^{\mathcal{M},X}$ is a \textit{maximal} $\Theta$-subset of $X$, $\llbracket \mathtt{info}_L\rrbracket^{\mathcal{M},X}$ is not a $\Theta$-subset of $X$. Hence by part \ref{Lem1a}, we have $\mathcal{M},w,X\vDash \neg \mathtt{good}_L$. This completes the proof that $\mathcal{M},w,X\vDash \mathtt{max}_K$.

For part \ref{Lem1c}, if $\mathcal{M},w,X\vDash \mathtt{max}_L$, then $\mathcal{M},w,X\vDash \mathtt{good}_L$, so by part \ref{Lem1a},  $\llbracket \mathtt{info}_L\rrbracket^{\mathcal{M},X}$ is a $\Theta$-subset of $X$. Now suppose $\llbracket \mathtt{info}_L\rrbracket^{\mathcal{M},X}$ is not a \textit{maximal} $\Theta$-subset, so there is a $\Theta$-subset $Y\supsetneq \llbracket \mathtt{info}_L\rrbracket^{\mathcal{M},X}$. Then by part \ref{Lem1b}, there is a $K\subseteq I$ such that $Y=\llbracket \mathtt{info}_K\rrbracket^{\mathcal{M},X}$ and $\mathcal{M},w,X\vDash \mathtt{max}_K$, so $\mathcal{M},w,X\vDash \mathtt{good}_K$. Since $Y\supsetneq \llbracket \mathtt{info}_L\rrbracket^{\mathcal{M},X}$ and $Y=\llbracket \mathtt{info}_K\rrbracket^{\mathcal{M},X}$, we have $\llbracket\mathtt{info}_K\rrbracket^{\mathcal{M},X}\supsetneq \llbracket\mathtt{info}_L\rrbracket^{\mathcal{M},X}$, so $\mathcal{M},w,X\vDash \Box(\mathtt{info}_L\rightarrow \mathtt{info}_K)\wedge \Diamond (\neg \mathtt{info}_L\wedge\mathtt{info}_K))$. Then since $\mathcal{M},w,X\vDash \mathtt{max}_L$, it follows that $\mathcal{M},w,X\vDash \neg \mathtt{good}_K$. From this contradiction we conclude that $\llbracket \mathtt{info}_L\rrbracket^{\mathcal{M},X}$ is a maximal $\Theta$-subset of $X$.\end{proof}

We are now ready to establish the key proposition used in the proof of Theorem \ref{ReductionThm}, namely the semantic equivalence of a conditional $\lambda$ (without embedded conditionals) and its translation $\lambda^*$.

\begin{proposition}\label{mainProp}  For any pointed model $\mathcal{M},w,X$: \[\mathcal{M},w,X\vDash \lambda\mbox{ iff }\mathcal{M},w,X\vDash \lambda^*.\]
\end{proposition}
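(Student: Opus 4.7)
The plan is to chain together two equivalences using Lemma \ref{Lem1} as the bridge between the syntactic scaffolding inside $\lambda^*$ and the semantic content of the Kolodny-MacFarlane clause. First, unwinding that clause, $\mathcal{M},w,X\vDash \lambda$ amounts to: for every maximal $\Theta$-subset $Y$ of $X$, $Y\subseteq\llbracket\Omega\rrbracket^{\mathcal{M},Y}$. By Lemma \ref{Lem1}.\ref{Lem1b} and Lemma \ref{Lem1}.\ref{Lem1c}, the sets of the form $Y_K:=\llbracket\mathtt{info}_K\rrbracket^{\mathcal{M},X}$, as $K$ ranges over subsets of $I$ with $\mathcal{M},w,X\vDash\mathtt{max}_K$, are exactly the maximal $\Theta$-subsets of $X$. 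Hence $\mathcal{M},w,X\vDash\lambda$ is equivalent to: for every such $K$, $Y_K\subseteq\llbracket\Omega\rrbracket^{\mathcal{M},Y_K}$.

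Next comes the unpacking of $\mathcal{M},w,X\vDash \lambda^*$. The outer $\bigwedge_{K\subseteq I}$ and the antecedent $\mathtt{max}_K$ match the quantification above; fix such a $K$. The outer $\Box(\mathtt{info}_K\to \cdots)$ then ranges over all $v\in Y_K$, and for each such $v$ there is a \emph{unique} $S_v\subseteq J$ with $v\in\llbracket\mathtt{state}_{S_v}\rrbracket^{\mathcal{M},X}$, namely $S_v=\{s\in J : v\in\llbracket\alpha_s\rrbracket^{\mathcal{M},X}\}$. So the inner $\bigwedge_{S\subseteq J}(\mathtt{state}_S\to\bigvee_{s\in S}\cdots)$ collapses, for each $v$, to the single requirement that there exists $s\in S_v$ such that $Y_K\subseteq\llbracket\beta_s\rrbracket^{\mathcal{M},X}$ and $Y_K\cap\llbracket\gamma_m\rrbracket^{\mathcal{M},X}\neq\varnothing$ for every $m\in D_s$.

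The final step is to match this pointwise condition to $v\in\llbracket\Omega\rrbracket^{\mathcal{M},Y_K}$. Recall $\omega_s = \alpha_s\wedge\Box\beta_s\wedge\bigwedge_{m\in D_s}\Diamond\gamma_m$, and that $\alpha_s$, $\beta_s$, and $\gamma_m$ are all nonmodal. Hence $v\in\llbracket\alpha_s\rrbracket^{\mathcal{M},Y_K}$ iff $s\in S_v$; $\mathcal{M},v,Y_K\vDash\Box\beta_s$ iff $Y_K\subseteq\llbracket\beta_s\rrbracket^{\mathcal{M},X}$; and $\mathcal{M},v,Y_K\vDash\Diamond\gamma_m$ iff $Y_K\cap\llbracket\gamma_m\rrbracket^{\mathcal{M},X}\neq\varnothing$. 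Putting these together, $v\in\llbracket\Omega\rrbracket^{\mathcal{M},Y_K}$ iff some $s\in S_v$ witnesses precisely the $\beta_s$ and $\gamma_m$ conditions produced in the previous paragraph, so the two sides of the target biconditional agree.

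The main obstacle is the bookkeeping in the middle step: converting the nested $\bigwedge_{S\subseteq J}$-$\bigvee_{s\in S}$ pattern into a pointwise ``$\exists s\in S_v$'' statement, and ensuring that the rewrite between truth sets relative to $X$ and truth sets relative to $Y_K$ is licensed by the nonmodality of the atomic constituents of $\Omega$. Once this is in place, Lemma \ref{Lem1} does the heavy lifting of matching the $\mathtt{max}_K$-indexed family to the actual maximal $\Theta$-subsets, and the proposition follows.
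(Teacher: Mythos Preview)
Your proposal is correct and follows essentially the same approach as the paper: both arguments use Lemma~\ref{Lem1}.\ref{Lem1b}--\ref{Lem1c} to identify the maximal $\Theta$-subsets with the sets $\llbracket\mathtt{info}_K\rrbracket^{\mathcal{M},X}$ indexed by $K$ satisfying $\mathtt{max}_K$, and then unpack the inner $\bigwedge_{S}(\mathtt{state}_S\to\bigvee_{s\in S}\cdots)$ pointwise via the unique $S_v$ and the nonmodality of $\alpha_s$, $\beta_s$, $\gamma_m$. If anything, your treatment of the biconditional at the final step is slightly more explicit than the paper's, which spells out only the direction from $\lambda^*$ to $\lambda$ in detail.
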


\begin{proof} Since $\lambda:= \Theta\Rightarrow\Omega$, we have $\mathcal{M},w,X\vDash \lambda$ iff for all maximal $\Theta$-subsets $Y$ of $X$, $Y\subseteq \llbracket \Omega\rrbracket^{\mathcal{M},Y}$.
By Lemma \ref{Lem1}, $Y$ is a maximal $\Theta$-subset of $X$ iff there is a $K\subseteq I$ such that $Y=\llbracket \mathtt{info}_K\rrbracket^{\mathcal{M},X}$ and $\mathcal{M},w,X\vDash \mathtt{max}_K$. Thus, the condition that $\mathcal{M},w,X\vDash \lambda^*$, i.e.,
\begin{eqnarray*}
\mathcal{M},w,X\vDash\underset{K\subseteq I}{\bigwedge} \Bigg(\mathtt{max}_K\rightarrow \Box \bigg(\mathtt{info}_K \rightarrow \underset{S\subseteq J}{\bigwedge} \Big(\mathtt{state}_S\rightarrow \underset{s\in S}{\bigvee}\big(\Box (\mathtt{info}_K\rightarrow \beta_s)\wedge \underset{m \in D_s}{\bigwedge}\Diamond(\mathtt{info}_K \wedge \gamma_m) \big) \Big)\bigg) \Bigg)\label{Eq3.5}\end{eqnarray*}
 is equivalent to: 
\begin{eqnarray}
&& \mbox{for all maximal $\Theta$-subsets $Y$ of $X$, there is a $K\subseteq I$ such that  $Y=\llbracket \mathtt{info}_K\rrbracket^{\mathcal{M},X}$ and}  \nonumber\\
&& \mathcal{M},w,X\vDash \Box \bigg(\mathtt{info}_K \rightarrow \underset{S\subseteq J}{\bigwedge} \Big(\mathtt{state}_S\rightarrow \underset{s\in S}{\bigvee}\big(\Box (\mathtt{info}_K\rightarrow \beta_s)\wedge \underset{m \in D_s}{\bigwedge}\Diamond(\mathtt{info}_K \wedge \gamma_m) \big) \Big)\bigg). \label{Eq4}\end{eqnarray}
Below we will show that if  $Y=\llbracket \mathtt{info}_K\rrbracket^{\mathcal{M},X}$, then 
\begin{eqnarray}
&&\mathcal{M},w,X\vDash \Box \bigg(\mathtt{info}_K \rightarrow \underset{S\subseteq J}{\bigwedge} \Big(\mathtt{state}_S\rightarrow \underset{s\in S}{\bigvee}\big(\Box (\mathtt{info}_K\rightarrow \beta_s)\wedge \underset{m \in D_s}{\bigwedge}\Diamond(\mathtt{info}_K \wedge \gamma_m) \big) \Big)\bigg)  \label{Eq5}
\end{eqnarray}
iff $Y\subseteq \llbracket \Omega\rrbracket^{\mathcal{M},Y}$. Thus, (\ref{Eq4}) is equivalent to: for all maximal $\Theta$-subsets $Y$ of $X$, $Y\subseteq \llbracket \Omega\rrbracket^{\mathcal{M},Y}$. Given the other equivalences above, this establishes $\mathcal{M},w,X\vDash \lambda^*$ iff $\mathcal{M},w,X\vDash \lambda$.

Now suppose $Y=\llbracket \mathtt{info}_K\rrbracket^{\mathcal{M},X}$. Then (\ref{Eq5}) is equivalent 
\begin{eqnarray*}
Y&\subseteq& \Bigg\llbracket \underset{S\subseteq J}{\bigwedge} \Big(\mathtt{state}_S\rightarrow \underset{s\in S}{\bigvee}\big(\Box (\mathtt{info}_K\rightarrow \beta_s)\wedge \underset{m \in D_s}{\bigwedge}\Diamond(\mathtt{info}_K \wedge \gamma_m) \big) \Big) \Bigg\rrbracket^{\mathcal{M},X}.
\end{eqnarray*}
Thus, to show $Y\subseteq \llbracket \Omega\rrbracket^{\mathcal{M},Y}$, it suffices to show
\begin{eqnarray}
&&\Bigg\llbracket \underset{S\subseteq J}{\bigwedge} \Big(\mathtt{state}_S\rightarrow \underset{s\in S}{\bigvee}\big(\Box (\mathtt{info}_K\rightarrow \beta_s)\wedge \underset{m \in D_s}{\bigwedge}\Diamond(\mathtt{info}_K \wedge \gamma_m) \big) \Big) \Bigg\rrbracket^{\mathcal{M},X} \subseteq \llbracket \Omega\rrbracket^{\mathcal{M},Y}.\label{Eq6}
\end{eqnarray}
So suppose $v$ is in the left-hand side of (\ref{Eq6}). By definition of $\mathtt{state}_S$, there is exactly one $S\subseteq J$ such that $v\in \llbracket \mathtt{state}_S\rrbracket^{\mathcal{M},X}$. Moreover, $S\not=\varnothing$, for if $S=\varnothing$, then the empty disjunction in (\ref{Eq6}) is $\bot$, so we would have $v\in \llbracket \mathtt{state}_S\rrbracket^{\mathcal{M},X}$ and $v\in \llbracket \mathtt{state}_S\rightarrow \bot\rrbracket^{\mathcal{M},X}=\llbracket \neg \mathtt{state}_S\rrbracket^{\mathcal{M},X}$, i.e.,  $v\not\in \llbracket \mathtt{state}_S\rrbracket^{\mathcal{M},X}$, a contradiction. Since $S\not=\varnothing$, there is an $s\in S$ such that 
\begin{equation}
v\in \llbracket \alpha_s\wedge\Box(\mathtt{info}_K\rightarrow \beta_s)\wedge \underset{m \in D_s}{\bigwedge}\Diamond(\mathtt{info}_K \wedge \gamma_m)  \rrbracket^{\mathcal{M},X}
\end{equation}
which with $Y=\llbracket \mathtt{info}_K\rrbracket^{\mathcal{M},X}$ implies
\begin{equation}
v\in \llbracket \alpha_s\wedge \Box \beta_s\wedge \underset{m \in D_s}{\bigwedge}\Diamond \gamma_m \rrbracket^{\mathcal{M},Y},
\end{equation}
so $v\in\llbracket \omega_s\rrbracket^{\mathcal{M},Y}\subseteq \llbracket \Omega\rrbracket^{\mathcal{M},Y}$. This establishes (\ref{Eq6}), which completes the proof. \end{proof}

Although we have now shown that any formula of $\mathcal{L}(\Rightarrow)$ can be effectively translated into a modal formula that is semantically equivalent according to Kolodny and MacFarlane's semantics, there is clearly a huge blowup in formula size. It is reasonable to conjecture that according to this semantics, $\mathcal{L}(\Rightarrow)$ is \textit{exponentially more succinct} than the basic modal language $\mathcal{L}$ in the sense of \cite{Lutz2006}. 

\section{Conclusion}

In \S\ref{DELtoLanguage}, we presented an example of how techniques from dynamic epistemic logic can be fruitfully applied to the formal semantics of modals and conditionals, by providing a complete axiomatization of the inferences validated by a formal semantics. For arguments that knowing such a complete axiomatization is of value for formal semantics, see \cite{HIForthcoming}. In the other direction, in \S\ref{SemanticsToDEL}, we presented an example of how ideas from the formal semantics of modals and conditionals can be profitably imported into dynamic epistemic logic, by enabling a natural kind of reasoning about epistemic change---focusing not on the mechanism of epistemic change but rather on the desired result of epistemic change. We hope that these examples might provide some stimulus for further cross-pollination between these two overlapping fields.

\nocite{*}
\bibliographystyle{eptcs}
\bibliography{indicativesDEL}
\end{document}